\documentclass[lettersize,journal]{IEEEtran}

\usepackage{amsmath,amsthm,amssymb,amsfonts}
\usepackage{graphicx}
\usepackage{booktabs}
\usepackage{array}
\usepackage{url}
\usepackage{cite}
\usepackage{multirow}
\usepackage{xcolor}
\usepackage{tikz}
\usetikzlibrary{arrows.meta, positioning}
\usepackage[caption=false,font=normalsize,labelfont=sf,textfont=sf]{subfig}
\usepackage{stfloats}
\usepackage{enumitem}
\definecolor{copenblue}{RGB}{0,102,204}
\definecolor{copenlight}{RGB}{173,216,230}
\definecolor{copendark}{RGB}{0,51,102}

\newtheorem{lemma}{Lemma}
\newtheorem{corollary}{Corollary}
\newtheorem{definition}{Definition}
\newtheorem{proposition}{Proposition}
\newtheorem{assumption}{Assumption}
\newtheorem{remark}{Remark}

\newcommand{\E}{\mathbb{E}}

\newcommand{\R}{\mathbb{R}}

\begin{document}

\title{Structural Decomposability of Encrypted Traffic\\ Side-Channel Leakage}

\author{Guangjie~Liu,~Guang~Cheng,~and~Weiwei~Liu%
\thanks{The authors are with the School of Cyber Science and Engineering,
Southeast University, Nanjing 210096, China
(e-mail: \{gjliu, gcheng, wwliu\}@seu.edu.cn).}%
\thanks{Manuscript received \today.}}

\markboth{IEEE Transactions on Dependable and Secure Computing}%
{Liu \MakeLowercase{\textit{et al.}}: Structural Decomposability of Encrypted Traffic Side-Channel Leakage}

\maketitle

\begin{abstract}
Existing side-channel theories treat leakage as a holistic quantity
$I(X;Y)$, lacking a formal characterization of its internal structure.
This paper systematically investigates the \emph{structural
decomposability} of encrypted traffic side-channel leakage.
Based on the structural causal model (SCM)
$X\!\to\!Y_{\mathrm{size}}\!\to\!Y_{\mathrm{dir}}\!\to\!Y_{\mathrm{time}}$
and the mutual-information chain rule, total leakage is precisely
decomposed into three sequential incremental terms for packet size,
direction, and timing.
Defense operations are formalized through the mechanism-replacement
semantics of a defense strategy variable~$D$.
Coupling information $C_{\mathrm{size,dir}}=I(Y_{\mathrm{size}};Y_{\mathrm{dir}}\!\mid\! X)$
is introduced to measure inter-dimensional statistical dependence,
and the Markov residual serves as a testable sufficient condition for
a single-dimension defense to sever subsequent leakage.
Causal efficacy~$\eta_d$ is defined to quantify per-unit-cost leakage
suppression efficiency.
Under a differentiable parametric family and a small-perturbation
assumption, a Fisher-geometric approximation
$I(X;Y)\approx\frac{1}{2\ln 2}\mathrm{Tr}(G\Sigma_\theta)$
is established.
Empirical validation on the Wang dataset (95 websites) shows:
$Y_{\mathrm{dir}}$ dominates leakage under the undefended baseline
(0.637\,bits), while Tor's fixed 512-byte cells cause
$Y_{\mathrm{size}}$ to degenerate ($I(X;Y_{\mathrm{size}})\!\approx\!0$);
FRONT suppresses the direction term by 63\%, yet the Markov residual
remains 0.021\,bits (bootstrap 95\% CI $[0.016,0.027]$ excludes zero),
indicating that single-dimension direction defense does not satisfy the
sufficient condition for severing timing leakage;
$\eta_{\mathrm{dir}}=0.97$ and $\eta_{\mathrm{time}}\approx 0$
quantitatively confirm FRONT's design intent of replacing only the
direction generation mechanism.
The paper unifies chain decomposition, coupling measurement, and
Fisher-geometric analysis to provide a computable, structured leakage
accounting method and a theoretical foundation for multi-dimensional
joint defense design.
\end{abstract}

\begin{IEEEkeywords}
Side channel, causal inference, chain decomposition, coupling
information, defense efficacy, website fingerprinting, information
geometry.
\end{IEEEkeywords}

\section{Introduction}

\IEEEPARstart{W}{ith} the widespread deployment of encryption protocols
such as TLS~1.3 and QUIC, the payload content of modern network
communications is strongly protected by cryptography.
Nevertheless, side-channel analysis based on metadata features---packet
lengths, inter-arrival times, and directions---continues to achieve high
accuracy on multiple tasks: website fingerprinting reaches
91--98\% accuracy in closed-world settings~\cite{wang2014effective,sirinam2018deep},
and many works on application identification and malicious traffic
detection report over 90\% accuracy~\cite{shen2021graphdapp,lin2022etbert}.
Encryption protects payload content but cannot conceal communication
metadata: packet lengths, timestamps, and directions are indispensable
for network routing and transport control, forming the physical basis of
side-channel leakage.
Under a fixed observational definition and non-degenerate distributional
differences, the mutual information $I(X;Y)$ is necessarily positive,
establishing the inevitability of side-channel
leakage~\cite{liu2026inevitability}.

Since leakage is unavoidable, defense becomes the central issue.
Most existing defenses are designed for a single dimension: Tamaraw
eliminates length differences via fixed-length padding; WTF-PAD
obfuscates timing patterns via random delay; yet both have been broken by
deep-learning attacks exploiting multi-dimensional
features~\cite{sirinam2018deep}.
The systematic review by Mathews~et~al.~\cite{mathews2023sok}
(SoK, IEEE S\&P~2023) evaluated nine prevalent defenses and found that
most failed to provide their claimed protection.
An adversary can simply shift to an unprotected dimension and recover
high accuracy---the failure of single-dimension defenses appears to be a
universal pattern, yet it has remained only an empirical observation
without a formal structural explanation.

The root cause is that existing side-channel theories treat leakage as a
holistic quantity $I(X;Y)$, lacking a formal characterization of its
internal structure.
Total leakage is an opaque whole: it cannot account for how much each of
size, direction, and timing contributes, nor can it characterize the
statistical dependences among dimensions---which is precisely the key to
understanding why single-dimension defenses fail and how effective joint
defenses should be designed.
To answer these questions we must decompose $I(X;Y)$ into structured
components for each dimension, establishing the
\textbf{structural decomposability} of side-channel leakage.

The core tool of structural decomposability is the mutual-information
chain rule.
For observed features
$Y=(Y_{\mathrm{size}},Y_{\mathrm{dir}},Y_{\mathrm{time}})$,
the chain rule yields the exact identity
\begin{align}\label{eq:intro-chain}
I(X;Y) &= I(X;Y_{\mathrm{size}})
         + I(X;Y_{\mathrm{dir}}\mid Y_{\mathrm{size}})\notag\\
        &\quad + I(X;Y_{\mathrm{time}} \mid Y_{\mathrm{size}},Y_{\mathrm{dir}}).
\end{align}
The sum of the three terms equals the total leakage exactly, with no
omissions and no double-counting.
Each term has a clear physical interpretation:
$I(X;Y_{\mathrm{size}})$ is the leakage when the adversary observes only
packet-size features;
$I(X;Y_{\mathrm{dir}}\mid Y_{\mathrm{size}})$ is the \emph{additional}
leakage from observing direction features given size is already known;
$I(X;Y_{\mathrm{time}}\mid Y_{\mathrm{size}},Y_{\mathrm{dir}})$ is the
additional leakage from timing patterns given that both size and direction
are already known.

However, the chain identity alone is a basic property of information
theory; translating it into a defense analysis tool requires two
theoretical pillars.
The first is the \textbf{structural causal model (SCM)}, which specifies
``which generation mechanism a defense operation alters.''
We use Pearl's SCM framework~\cite{pearl2009causality} to build the
causal chain
$X\!\to\!Y_{\mathrm{size}}\!\to\!Y_{\mathrm{dir}}\!\to\!Y_{\mathrm{time}}$
and introduce the defense strategy variable~$D$ to describe the
mechanism-replacement effect of a defense---we do not pursue causal
effect identification via do-calculus, but rather precisely specify
``padding replaces the parameters of $f_{\mathrm{size}}$, direction
obfuscation replaces the parameters of $f_{\mathrm{dir}}$, and timing
delay replaces the parameters of $f_{\mathrm{time}}$.''
This creates a precise correspondence between ``defense design'' and
``chain component suppression,'' with testable propositions formulated in
terms of information quantities (mutual information, conditional mutual
information, coupling information, Markov residual) rather than causal
identification.
The second pillar is \textbf{information geometry}, used to characterize
the local separability change due to a defense.
We are interested not in the performance curve of a specific classifier,
but in the local curvature of distributional separability:
under a regular parametric family and a small-perturbation assumption,
the local variation in mutual information can be approximated as
$\frac{1}{2\ln 2}\mathrm{Tr}(G\Sigma_\theta)$, where $G$ is the Fisher
information matrix and $\Sigma_\theta$ is the parameter covariance matrix.
This geometric approximation converts design questions such as ``which
dimension is more compressible'' and ``does a single-dimension replacement
alter the spectral properties of other dimensions'' into analyzable
geometric propositions, and gives causal efficacy $\eta_d$---the leakage
suppression per unit defense cost---a computable form.

This paper does not attempt to prove in full generality that ``any
single-dimension defense must fail.''
The output is a structured decomposition and testable criteria: they tell
you ``through which information channel leakage escapes'' and ``whether a
given sufficient condition is satisfied.''
Empirical validation on the Wang dataset (95 websites) shows that
$Y_{\mathrm{dir}}$ dominates leakage under the undefended baseline
(0.637\,bits), while Tor's fixed 512-byte cells cause
$Y_{\mathrm{size}}$ to degenerate ($I(X;Y_{\mathrm{size}})\approx 0$);
FRONT suppresses the direction term by 63\%, but the Markov residual is
0.021\,bits (bootstrap 95\% CI $[0.016,0.027]$ excludes zero), indicating
that replacing only the direction mechanism does not satisfy the sufficient
condition for severing timing leakage;
causal efficacy $\eta_{\mathrm{dir}}=0.97$ and
$\eta_{\mathrm{time}}\approx 0$, quantitatively confirming FRONT's design
intent of targeting only the direction channel.

This paper is positioned as follows: building on results establishing the
inevitability of leakage ($I(X;Y)>0$)~\cite{liu2026inevitability}, we
systematically study the structural decomposability of leakage to provide a
computable, structured leakage accounting method and theoretical foundation
for ``where leakage comes from, how each dimension contributes, and when a
single-dimension defense is sufficient.''
In the empirical section, to accommodate the properties of publicly
available data (the Wang dataset comes from Tor, whose fixed 512-byte
cells cause $Y_{\mathrm{size}}$ to degenerate), we focus on the
direction dimension $Y_{\mathrm{dir}}$ and timing dimension
$Y_{\mathrm{time}}$ via proxy statistics
$(\hat{Y}_{\mathrm{dir},1}, \hat{Y}_{\mathrm{time}},
\hat{Y}_{\mathrm{dir},2})$ for a computability demonstration; this
demonstration serves framework validation and reflects the structural
characteristics of the Tor scenario.

The paper is organized as follows. Section~\ref{sec:related} surveys
related work. Section~\ref{sec:scm} establishes the causal decoupling
model. Section~\ref{sec:ig} introduces the information-geometry framework.
Section~\ref{sec:leakage} develops the leakage decomposition and causal
efficacy. Section~\ref{sec:empirical} presents empirical validation.
Section~\ref{sec:discussion} discusses theoretical boundaries and defense
implications. Section~\ref{sec:conclusion} concludes.

\section{Related Work}\label{sec:related}

\subsection{Attack Side: Multi-Dimensional Feature Fusion and Long-Range Dependence}

From the attack perspective, the core of encrypted-traffic side-channel
analysis is not any particular feature or classifier, but rather the
adversary's progressive expansion of observable feature sets and
exploitation of their conditional incremental information: from
single-dimensional statistics to direction/timing sequences, and then to
multi-scale long-range modeling.
The improvement in attack performance reflects a systematic extraction of
``residual separability brought by newly observed dimensions.''
This viewpoint provides an attack-theoretic rationale for the chain
accounting of total leakage by $(Y_{\mathrm{size}},
Y_{\mathrm{dir}}, Y_{\mathrm{time}})$ in this paper.

\paragraph{Multi-dimensional Feature Fusion}
Wang~et~al.~\cite{wang2014effective} found in an open-world setting that
the total number of incoming packets is the single most discriminative
feature; Hayes and Danezis~\cite{hayes2016kfingerprinting} maintained
high identification rates even with a monitored set of 100,000 unmonitored
pages, showing that timing features of long flows carry rich semantic
information.
Sirinam~et~al.~\cite{sirinam2018deep} achieved high accuracy on Tor using
end-to-end sequential representations (direction/timing and their
combination) in Deep Fingerprinting, a representative and widely reproduced
result against WTF-PAD (more robust representation attacks have since
appeared~\cite{shen2023subverting});
Reed and Kranch~\cite{reed2017netflix} demonstrated the strong
discriminative power of long-term timing patterns by identifying HTTPS
video streams from extended observations.
More recently, multi-dimensional fusion has been further systematized:
Qu~et~al.~\cite{qu2023hierarchical} proposed a three-level hierarchical
deep-learning framework (packet $\to$ flow $\to$ trace) showing the
complementarity of size, direction, and timing;
Jin~et~al.~\cite{jin2023transformer} extended the Transformer architecture
to multi-tab fingerprinting, demonstrating that long flows retain
sufficient distinguishable structure in mixed traffic.

\paragraph{Long-Range Dependence and Early-Stage Identification}
These results collectively indicate that timing and direction accumulate
information continuously over long ranges.
Deng~et~al.~\cite{deng2024earlystage} quantified the accuracy gap between
early (few-packet) and full-trace observation, providing evidence of
significant information increments between early and long-range observations.
Cherubin~et~al.~\cite{cherubin2022online} measured structural trade-offs
between monitored-set size and feature-dimension utilization on real Tor
exit relays.
Siby~et~al.~\cite{siby2023quic} provided packet-sampling-rate versus
attack-accuracy curves for QUIC, showing that multi-dimensional features
remain robust under short observation windows.
Mei~et~al.~\cite{mei2025high} addressed the base-rate fallacy of extremely
low anonymous-traffic fractions in open-world deployments, achieving more
than an eightfold accuracy improvement under a
Tor/non-Tor ratio of 1:1000.

From a theoretical perspective, Liu~et~al.~\cite{liu2026inevitability}
proved the inevitability of encrypted-traffic side-channel leakage
(Side-Channel Existence Theorem) using information theory, showing that
$I(X;Y)>0$ is a structural result that cannot be eliminated by encryption.
However, that work treats leakage as a holistic quantity $I(X;Y)$ without
addressing its internal structure---which decomposable structural components
contribute to total leakage $I(X;Y)$?
These attack studies exploit multi-dimensionality in practice but lack
formal theoretical support: why can these different-dimensional features be
used independently or jointly for leakage inference?
This is the direct motivation for introducing chain decomposition and
coupling information in this paper.

\subsection{Defense Side: Limitations of Single-Dimension Defenses}

Existing defense strategies are diverse in implementation, but many are
designed with \emph{a single atomic dimension as the direct intervention
target}, obfuscating specific metadata features through padding, delay,
or shaping.
BuFLO~\cite{dyer2012peek} fixes the sending rate and can theoretically
eliminate differences, but at over 100\% bandwidth overhead.
Tamaraw~\cite{cai2014systematic} eliminates size differences via
fixed-length padding and establishes a theoretical lower bound on bandwidth
for any defense.
Walkie-Talkie~\cite{wang2017walkie} uses ``half-duplex'' burst shaping to
further reduce overhead.
WTF-PAD~\cite{juarez2016wtfpad} obfuscates timing patterns via random
delay, but has been broken by Deep Fingerprinting~\cite{sirinam2018deep}.
NetShaper~\cite{sabzi2024netshaper} is the first to model network
side-channel mitigation as an $(\varepsilon,\delta)$-differential privacy
problem, providing quantifiable privacy--overhead trade-offs.
Palette~\cite{shen2024palette} clusters websites with similar traffic
patterns and normalizes them to a unified template.
Huang~et~al.~\cite{huang2025wfa2d} proposed the asymmetric adversarial
defense WF-A2D, achieving 97\% defense success rate against seven attacks
on HTTP/3-QUIC with less than 2\% bandwidth overhead.
These strategies share a common property: they strongly obfuscate one
dimension (e.g., size or timing), yet leakage in other dimensions (e.g.,
timing or direction patterns) may still be exploited.

At the mechanism level, most existing defenses fall into three categories:
(i) \emph{regularization/shaping}, which reduces inter-class differences
in one dimension (e.g., Tamaraw's fixed length, BuFLO's fixed rate,
Walkie-Talkie's half-duplex burst shaping);
(ii) \emph{randomization/injection}, which alters the identifiability of
the observation channel (e.g., WTF-PAD's random delay, FRONT's dummy
packet injection);
(iii) \emph{splitting}, which reduces the observation resolution available
to the adversary (e.g., TrafficSliver's multi-path routing).
The common difficulty of all three is that they typically target a single
dimension directly, but adversaries can exploit the conditional incremental
information of the remaining dimensions, producing the empirical phenomenon
of ``single-dimension effectiveness, but overall leakage persists.''

The systematic review by Mathews~et~al.~\cite{mathews2023sok} revealed
the prevalence of single-dimension defense failure:
reexamining nine purportedly efficient WF defenses with state-of-the-art
DL attacks, most failed to provide their claimed protection.
Tamaraw (fixed-length padding) was broken because adversaries could exploit
timing features; WTF-PAD (random delay) was broken because adversaries could
exploit size and direction patterns.
Subsequent work further showed that CNN representations robust to traffic
shaping can break multiple defenses, demonstrating the fundamental
fragility of heuristic shaping under adaptive attacks.
The empirical conclusion of that SoK is that the phenomenon is widespread;
the task of this paper is to provide ``explanation and accounting''---connecting
the failure phenomenon to measurable structural quantities such as coupling
information and Markov residuals via a leakage decomposition framework.
We emphasize that this paper does not use attack-accuracy curves as the
primary output metric, but rather uses chain mutual information components,
coupling information, Markov residuals, and causal efficacy as structural
outputs; the empirical section demonstrates that these quantities are
computable and aligned with the defense mechanism family.

\subsection{Theoretical Tools: Causal Inference and Information Geometry}

Pearl's structural causal model (SCM)~\cite{pearl2009causality} provides
a formal language for ``mechanism replacement/mechanism selection'': a
defender's choice of strategy value $D=d$ can represent the adoption of a
specific defense mechanism, thereby replacing the parameter family of the
corresponding generation mechanism at the structural-equation level.
This paper uses this ``mechanism-replacement semantics'' to formalize
defense operations, without conducting causal effect identification based
on do-calculus.
However, in the context of side-channel leakage decomposition, existing
work has not yet unified the ``mechanism replacement'' semantics of SCM
with the chain decomposition of mutual information under the same
computable quantities.
A key theoretical gap is: how to formally map, within the SCM framework,
``which generation mechanism is replaced by a defense'' to ``the projection
of that replacement onto the leakage chain decomposition''?

Information geometry provides a Riemannian geometric framework for
statistical manifolds~\cite{amari2016information}.
The classical work of Amari~et~al. shows that the Fisher information
matrix gives the metric tensor of the statistical manifold, and its
eigenvalue distribution (spectral properties) characterizes the local
anisotropy of the manifold.
In machine learning, information geometry has been widely used for natural
gradient optimization~\cite{amari1998natural}, EM algorithm theory, and
optimization theory for deep learning~\cite{amari1995information}.
However, in the side-channel analysis context, existing work uses the
Fisher information mostly as a local technical component, and a unified
framework that maps ``the mechanism-replacement target of a defense'' and
``the chain increment structure of leakage'' to the same computable quantity
(mutual information/conditional mutual information/coupling information and
their geometric approximations) is lacking.
This paper aims to fill this structured accounting gap: SCM provides
mechanism-replacement semantics, chain decomposition provides leakage
components, and geometric approximation provides an analytic characterization
of local compressibility.

Synthesizing these three threads, this paper is organized around the
following four verifiable theoretical claims:

\begin{enumerate}[label=(\arabic*)]
\item \emph{Formal accounting of leakage structural decomposability.}
  Existing work treats leakage as a holistic quantity $I(X;Y)$.
  This paper formalizes structural decomposability as a computable
  accounting on the chain identity
  $I(X;Y)=I(X;Y_{\mathrm{size}})+I(X;Y_{\mathrm{dir}}|Y_{\mathrm{size}})+
  I(X;Y_{\mathrm{time}}|Y_{\mathrm{size}},Y_{\mathrm{dir}})$
  and makes explicit that this is independent of statistical independence.
  The ``orthogonality'' claimed here refers only to tangent-space
  orthogonality induced by the Fisher metric under a given differentiable
  parametrization; the leakage components themselves use the structural
  decomposition given by the chain rule, with no requirement of statistical
  independence.

\item \emph{Structural explanation of defense failure.}
  Existing work identifies single-dimension defense failure under
  multi-dimensional attacks~\cite{mathews2023sok,shen2023subverting}
  only from empirical phenomena.
  This paper provides a testable sufficient condition and failure
  evidence using coupling information $C_{\mathrm{dir,time}}$ and Markov
  residual $I(X;Y_{\mathrm{time}}|Y_{\mathrm{dir}})$:
  if the Markov residual is significantly greater than zero, then a
  single-dimension direction defense does not satisfy the sufficient
  condition for severing timing leakage.

\item \emph{Structural integration of theoretical tools.}
  This paper aligns the ``mechanism-replacement semantics'' of causal
  inference with the ``Fisher approximation'' of information geometry to
  the same decomposition quantities (mutual information/conditional mutual
  information/coupling information), providing a unified theoretical
  framework for the side-channel leakage decomposition problem.

\item \emph{Quantifiable guidance for defense design.}
  This paper converts ``how much leakage is suppressed per unit cost in
  dimension $d$'' into a reportable metric $\eta_d$ via causal efficacy,
  providing directional guidance for the trade-off design of
  multi-dimensional joint defenses.
\end{enumerate}

It is worth emphasizing that network side-channel defense faces an
unavoidable ``system constraint triangle'': \textbf{bandwidth overhead}
(padding extra bytes occupies link capacity), \textbf{latency overhead}
(random delay affects user experience), and \textbf{buffer stability}
(traffic shaping that continuously buffers packets causes queue overflow).
A defender cannot arbitrarily ``inject noise''---any mechanism replacement
must be performed within the triangular constraints, which is precisely why
``causal efficacy'' and ``cost functional'' appear in this paper: we need
to maximize the leakage suppression per unit cost within the feasible
constraint set.

This paper does not replace attack experiments and does not claim global
conclusions across memory-bearing or non-causal defense classes.
Within the stationary memoryless defense class
(Definition~\ref{def:memoryless-defense}), we establish a formal theory
of leakage multi-dimensionality, providing causal and geometric dual
theoretical foundations for the design of multi-dimensional joint defenses.

\section{Causal Decoupling Model}\label{sec:scm}

This section establishes a causal model from application semantics to
observed features, formalizing the causal dependencies among dimensions.
The core idea is to decompose the observed feature~$Y$ into three atomic
components---size, direction, and timing---using an SCM to characterize
their causal dependencies; and to formalize, without invoking do-calculus,
the effect of defense operations on generation mechanisms using
``mechanism-replacement'' semantics.
The term ``decoupling'' here refers only to the modular decomposition of
generation mechanisms at the structural-equation level in order to specify
intervention targets, and does not presuppose statistical or conditional
independence among the observed dimensions.

\subsection{Multi-Dimensional Decomposition of Side-Channel Features}

In encrypted-traffic side-channel analysis, an adversary records, within a
session window, the packet-level metadata sequence
$\{(t_i,\ell_i,d_i)\}_{i=1}^{n}$---the arrival time, length, and
direction of the $i$-th packet---which is fully visible at the network
layer even when traffic is encrypted.
The adversary extracts three classes of statistics from this, forming the
multi-dimensional decomposition of the observed feature $Y$.

\begin{definition}[Atomic Dimension Decomposition of Side-Channel Features]
\label{def:multi-dim}
The lowest-level atomic metadata of encrypted traffic is the sequence
$(t_i, \ell_i, d_i)$, representing the arrival time, length, and direction
of the $i$-th packet (where $+1$ denotes downstream/inbound and $-1$
denotes upstream/outbound).
We decompose the observed feature $Y$ into three \emph{atomic dimensions},
each corresponding to a single-component statistic of one class of
underlying metadata:
\begin{equation}\label{eq:multi-dim}
Y = (Y_{\mathrm{size}}, Y_{\mathrm{dir}}, Y_{\mathrm{time}}),
\end{equation}
where:
\begin{itemize}
\item \textbf{Size dimension} $Y_{\mathrm{size}}$: A statistical
  description of the \emph{purely} packet-length sequence
  $\{\ell_i\}_{i=1}^n$, \emph{without} mixing direction information.
  Examples include packet-length distribution $P_{\mathrm{size}}$, mean
  length $\bar\ell$, length entropy $H_{\mathrm{size}}$, or cumulative
  packet length $\sum_{i=1}^n \ell_i$.

\item \textbf{Direction dimension} $Y_{\mathrm{dir}}$: A statistical
  description of the \emph{purely} direction sequence $\{d_i\}_{i=1}^n$.
  Examples include the incoming/outgoing packet count ratio
  $n_{\mathrm{in}}/n_{\mathrm{out}}$, the number of direction reversals
  $\#\{i: d_i \neq d_{i+1}\}$, alternation-pattern entropy
  $H_{\mathrm{dir}}$, or the cumulative signed-packet-count sum
  $\sum_{i=1}^n d_i$.

\item \textbf{Timing dimension} $Y_{\mathrm{time}}$: A statistical
  description of the inter-arrival-interval sequence
  $\{\Delta_i\}_{i=1}^{n-1}$, where $\Delta_i=t_{i+1}-t_i$.
  Examples include interval distribution $P_{\mathrm{time}}$, mean
  interval $\bar\Delta$, timing entropy $H_{\mathrm{time}}$, or session
  duration $T=\sum_{i=1}^{n-1}\Delta_i$.
\end{itemize}
This decomposition differs from classical ``signed length'' features
($d_i \times \ell_i$, mixing size and direction) or ``burst statistics''
(mixing direction reversals, duration, and packet counts): the three atomic
dimensions are non-overlapping at the feature-definition level, each
corresponding to one coordinate component of the packet metadata
$(\ell_i, d_i, t_i)$, which enables a precise characterization of
causal dependencies and coupling relationships among dimensions.
\end{definition}

\begin{remark}
As an example (page loading), consider $X=$ ``browser initiating a page
request.''
A text-heavy blog produces a ``short upstream request + single large
downstream'' pattern: $Y_{\mathrm{size}}$ exhibits a bimodal distribution
of few small packets plus many medium ones; $Y_{\mathrm{dir}}$ shows
extreme asymmetry ($n_{\mathrm{in}}/n_{\mathrm{out}}\approx 50$);
$Y_{\mathrm{time}}$ shows dense downstream bursts following a short RTT.
A resource-rich news portal produces multiple request--response alternations:
$Y_{\mathrm{size}}$ has a more complex multimodal distribution;
$Y_{\mathrm{dir}}$ shows frequent direction reversals;
$Y_{\mathrm{time}}$ shows multiple RTT peaks.
Among the three atomic dimensions, $Y_{\mathrm{dir}}$ is usually the most
semantically discriminative, directly exposing the interaction pattern;
$Y_{\mathrm{size}}$ degenerates under fixed-packet-size protocols such as
Tor's 512-byte cells; $Y_{\mathrm{time}}$ captures residual leakage from
network conditions and computation latency.
\end{remark}

\subsection{Structural Causal Model}

The SCM formalizes ``how application semantics determine features in each
dimension.''

\begin{definition}[Side-Channel SCM]\label{def:scm}
The side-channel SCM is a directed acyclic graph (DAG)
$\mathcal{G}=(\mathcal{V},\mathcal{E})$ with vertex set
$\mathcal{V}=\{X, Y_{\mathrm{size}}, Y_{\mathrm{dir}}, Y_{\mathrm{time}}\}$
and edge set $\mathcal{E}$ capturing causal dependencies:
\begin{align}\label{eq:scm-edges}
\mathcal{E} = \{&X\to Y_{\mathrm{size}},\; Y_{\mathrm{size}}\to Y_{\mathrm{dir}},\;
Y_{\mathrm{dir}}\to Y_{\mathrm{time}},\notag\\
&X\to Y_{\mathrm{dir}},\;
X \to Y_{\mathrm{time}},\; Y_{\mathrm{size}}\to Y_{\mathrm{time}}\}.
\end{align}
Each edge corresponds to a structural equation; the defense strategy
variable $D$ is introduced to characterize the effect of different defense
mechanisms:
\begin{align}
Y_{\mathrm{size}} &= f_{\mathrm{size}}(X, U_{\mathrm{size}}; D), \label{eq:f-size}\\
Y_{\mathrm{dir}}  &= f_{\mathrm{dir}}(Y_{\mathrm{size}}, X, U_{\mathrm{dir}}; D), \label{eq:f-dir}\\
Y_{\mathrm{time}} &= f_{\mathrm{time}}(Y_{\mathrm{dir}}, Y_{\mathrm{size}}, X, U_{\mathrm{time}}; D), \label{eq:f-time}
\end{align}
where $U_{\mathrm{size}}, U_{\mathrm{dir}}, U_{\mathrm{time}}$ are
exogenous noise variables (modeling randomness in the protocol stack,
network jitter, Nagle algorithm, etc.) that are statistically independent
of $X$; exogenous noise terms may be correlated with each other to absorb
shared random sources and common network perturbations---the chain MI
decomposition does not depend on the independence of exogenous noise.
$D$ is an externally controllable defense strategy variable (with no
parents in the DAG, selected by the defense designer), whose different
values correspond to different types of defense mechanisms (e.g.,
fixed-length padding, dummy packet injection, random delay).
This paper adopts ``mechanism-replacement'' rather than ``variable
intervention'' semantics: changing $D$ replaces the mechanism parameters
of $f_{\cdot}$ (and possibly the noise distribution), while the functional
form of the structural equations remains unchanged.
\end{definition}

\begin{figure}[!t]
\centering
\resizebox{\columnwidth}{!}{%
\begin{tikzpicture}[
    node/.style={circle, draw=copenblue, thick, minimum size=1.2cm,
                 fill=copenlight!30},
    arrow/.style={-Stealth, very thick, copendark}
]
\node[node] (X)     {\small $X$};
\node[node] (Ysize) [right=2.4cm of X]     {\small $Y_{\mathrm{size}}$};
\node[node] (Ydir)  [right=2.4cm of Ysize] {\small $Y_{\mathrm{dir}}$};
\node[node] (Ytime) [right=2.4cm of Ydir]  {\small $Y_{\mathrm{time}}$};

\draw[arrow] (X) -- (Ysize);
\draw[arrow] (Ysize) -- (Ydir);
\draw[arrow] (Ydir) -- (Ytime);

\draw[arrow, dashed] (X)     to[bend left=18] (Ydir);
\draw[arrow, dashed] (X)     to[bend left=22] (Ytime);
\draw[arrow, dashed] (Ysize) to[bend left=18] (Ytime);

\node[font=\tiny, copendark] (Usize) [above=0.9cm of Ysize] {$U_{\mathrm{size}}$};
\node[font=\tiny, copendark] (Udir)  [above=0.9cm of Ydir]  {$U_{\mathrm{dir}}$};
\node[font=\tiny, copendark] (Utime) [above=0.9cm of Ytime] {$U_{\mathrm{time}}$};

\draw[arrow, dashed, gray, thin] (Usize) -- (Ysize);
\draw[arrow, dashed, gray, thin] (Udir)  -- (Ydir);
\draw[arrow, dashed, gray, thin] (Utime) -- (Ytime);

\node[below=0.15cm of X,     font=\scriptsize] {App. semantics};
\node[below=0.15cm of Ysize, font=\scriptsize] {Size features};
\node[below=0.15cm of Ydir,  font=\scriptsize] {Direction features};
\node[below=0.15cm of Ytime, font=\scriptsize] {Timing features};
\end{tikzpicture}}%
\caption{DAG of the side-channel SCM. Solid arrows represent the
primary causal chain
$X \to Y_{\mathrm{size}} \to Y_{\mathrm{dir}} \to Y_{\mathrm{time}}$;
dashed arrows represent direct causal dependencies; $U$ denotes
exogenous noise.}
\label{fig:scm}
\end{figure}

The DAG shown in Fig.~\ref{fig:scm} contains direct edges from $X$ to
each observed variable, a semi-fully-connected structure that deliberately
does not impose strong conditional independence constraints (via
d-separation).
This is an intentional design choice: the complexity of side-channel
leakage implies that $X$ may influence each observed dimension through
multiple paths (direct or indirect).
The SCM functions here primarily as (1) specifying the directionality of
causal dependencies, avoiding confusion of causal direction; and (2)
providing a formal framework for ``mechanism replacement,'' where changing
$D$ corresponds to changing the parameters of specific generation mechanisms
($f_{\mathrm{size}}$, $f_{\mathrm{dir}}$, $f_{\mathrm{time}}$).

\subsection{Defense Strategy Variable and Mechanism Replacement}

\begin{definition}[Defense Strategy Constraints]\label{def:defense-constraint}
Within the SCM framework, different types of defenses correspond to
different constraints on the defense strategy variable $D$:
\begin{enumerate}[label=(\roman*)]
\item \textbf{Size-padding defense}: $D$ only modifies parameters
  associated with the size mechanism (e.g., padding policy, packet-size
  regularization rules) while leaving direction and timing mechanism
  parameters unchanged (e.g., Tamaraw's fixed-length padding, Tor's
  512-byte cells). The mechanism parameters of $f_{\mathrm{size}}$ are
  replaced.

\item \textbf{Direction-obfuscation defense}: $D$ only modifies parameters
  associated with the direction mechanism (e.g., dummy packet injection
  policy, bidirectional padding flow parameters) while leaving size and
  timing mechanism parameters unchanged. The mechanism parameters of
  $f_{\mathrm{dir}}$ are replaced. Typical examples include FRONT's
  dummy packet insertion altering upstream/downstream interaction patterns.

\item \textbf{Timing-delay defense}: $D$ only modifies parameters
  associated with the timing mechanism (e.g., delay distribution, rate
  cap) while leaving size and direction mechanism parameters unchanged
  (e.g., WTF-PAD's random delay, constant-rate sending).
  The mechanism parameters of $f_{\mathrm{time}}$ are replaced.

\item \textbf{Dummy-packet injection defense}: This is a composite
  mechanism-replacement operation. Note that dummy packet injection can
  serve as an implementation of ``direction mechanism replacement''
  (e.g., FRONT primarily targets the interaction/direction structure) or
  can be designed to serve size and timing jointly. In the latter case,
  the injection policy explicitly spans multiple atomic dimensions.
\end{enumerate}
\end{definition}

The advantage of this formulation is that it does not assume ``only the
output of one dimension changes while the generation mechanisms of other
dimensions remain completely unchanged'' (which is false in practice
because the causal chain
$Y_{\mathrm{size}}\to Y_{\mathrm{dir}}\to Y_{\mathrm{time}}$
means that input changes propagate to outputs downstream), but instead
precisely specifies ``which class of mechanism parameters is changed.''

\begin{definition}[Stationary Memoryless Defense]\label{def:memoryless-defense}
Let $\Delta_i := t_{i+1}-t_i$ ($i=1,\dots,n-1$) and $\Delta_n:=0$, so
that $(\ell_i,d_i,\Delta_i)$ is the metadata triple associated with the
$i$-th packet.
A defense mechanism $D$ is called \emph{stationary memoryless} if there
exists an i.i.d.\ noise sequence $\{\xi_i\}_{i=1}^{\infty}$
($\xi_i$ independent of $X$) such that, for any application semantics
$X=x$, the packet-level metadata triple
$Z_i := (\ell_i,d_i,\Delta_i)$ satisfies
\begin{equation}\label{eq:iid-defense}
Z_i \mid (X=x,D) \stackrel{\mathrm{i.i.d.}}{\sim} P_{Z\mid X=x}^{D},
\end{equation}
where $P_{Z\mid X=x}^{D}$ allows $(\ell_i,d_i,\Delta_i)$ to be correlated
at the same index $i$ (no requirement that $P_{\ell,d,\Delta\mid X}^{D}$
factorizes), and the distribution may depend on $D$ to accommodate
randomization mechanisms such as dummy packet injection.
This assumption excludes explicit history dependence (e.g., buffer queue
state) and provides an idealized approximation of a class of analyzable
defense families.
Extending the framework to memory-bearing or non-causal defenses is a
direction for future work.
\end{definition}

\section{Information Geometry Framework}\label{sec:ig}

\textbf{This section answers a purely side-channel question}: when the
defense knob is adjusted by a small step, how does the attacker's
separability change?
The core of the answer is that the KL divergence (the measure of
distributional separability) has a quadratic local approximation governed
by the Fisher information matrix.
Therefore, the size of the Fisher matrix eigenvalues in the defense
parameter direction directly characterizes ``how much separability change
the same magnitude of mechanism replacement can produce''---this is a
local sensitivity analysis tool for side-channel defense design.

\subsection{Statistical Manifold and Fisher Information Matrix}

Given a parametric distribution family
$\{p(y;\theta) : \theta\in\Theta\}$, where
$\theta=(\theta^1,\ldots,\theta^d)$ is the parameter vector and
$\Theta\subset\R^d$ is the parameter space, this family naturally
constitutes a $d$-dimensional differentiable manifold, called the
\emph{statistical manifold}.

In the side-channel analysis context, the parameter $\theta$ should
correspond to observable traffic statistics.
For example, for a packet-length distribution (e.g., exponential or Gamma),
$\theta$ may take the mean parameter $\mu_{\ell}$ and shape parameter
$k_{\ell}$; for a packet inter-arrival distribution (e.g., a Poisson
process), $\theta$ may take the arrival rate $\lambda$; for defense
mechanisms, $\theta$ may include the padding rate $\alpha$ and timing
perturbation scale $\tau_{\mathrm{time}}$.

\begin{definition}[Fisher Information Matrix]\label{def:fim}
For a distribution $p(y;\theta)$, the Fisher information matrix
$G(\theta)=[g_{ij}(\theta)]$ is defined as
\begin{equation}\label{eq:fim}
g_{ij}(\theta) = \E_{Y\sim P_\theta}\!\left[\frac{\partial\log p(Y;\theta)}{\partial\theta^i}
\frac{\partial\log p(Y;\theta)}{\partial\theta^j}\right].
\end{equation}
\end{definition}

The Fisher information matrix gives the Riemannian metric of the
statistical manifold
$\mathcal{M}=\{p(y;\theta):\theta\in\Theta\}$:
for tangent vectors $v,w\in T_\theta\mathcal{M}$, their inner product is
$\langle v,w\rangle_\theta = v^T G(\theta) w$.
The geometric meaning is that the larger the eigenvalue of $G(\theta)$ in
the direction $\theta$, the greater the statistical distance caused by a
small change in parameter $\theta$.
In the side-channel context, $G(\theta)$ is a \emph{distinguishability
sensitivity meter}: if $v^\top G(\theta)v$ is large in some direction $v$,
then a small parameter shift of $\delta$ along $v$ produces a significant
distributional difference (KL divergence $\sim\frac{1}{2}\delta^2 v^\top G(\theta)v$),
and the adversary needs only a limited number of samples to distinguish;
conversely, directions where $v^\top G(\theta)v$ is small are ``blind
zones'' for the adversary.
Each eigenvalue of $G(\theta)$ corresponds to the signal-to-noise ratio
of a ``side-channel distinguishing channel.''

\subsection{Local Regularity and Geometric Approximation of Mutual Information}

\begin{assumption}[Local Second-Order Regularity]\label{ass:local-euclidean}
The log-likelihood $\ell(y;\theta)=\log p(y;\theta)$ is three-times
differentiable in $\theta$ within a neighborhood of $\theta$, with bounded
third-order derivatives, and the Fisher information matrix $G(\theta)$ is
continuous in that neighborhood.
Under this condition, the KL divergence has a second-order Taylor expansion
in the parameter perturbation $\delta$, with the quadratic term given by
the Fisher information.
\end{assumption}

\begin{proposition}[KL Divergence and Fisher Information]\label{prop:kl-fim}
Under Assumption~\ref{ass:local-euclidean}, for a small perturbation
$\|\delta\|\leq \varepsilon$, the second-order expansion of the KL
divergence satisfies:
\begin{equation}\label{eq:kl-fim}
D_{\mathrm{KL}}\!\big(P_{\theta}\,\|\,P_{\theta+\delta}\big)
  = \frac{1}{2}\delta^\top G(\theta)\delta + O(\|\delta\|^3).
\end{equation}
\end{proposition}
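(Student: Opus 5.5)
The plan is to Taylor-expand the log-likelihood ratio in $\delta$ and take expectations under $P_\theta$. Write $\ell(y;\theta)=\log p(y;\theta)$, so that
\[
D_{\mathrm{KL}}(P_\theta\|P_{\theta+\delta})=\E_{Y\sim P_\theta}\big[\ell(Y;\theta)-\ell(Y;\theta+\delta)\big].
\]
By Assumption~\ref{ass:local-euclidean}, $\ell$ is three times differentiable in a neighborhood of $\theta$. For $\|\delta\|\le\varepsilon$ inside that neighborhood, Taylor's theorem with Lagrange remainder gives
\[
\ell(y;\theta+\delta)-\ell(y;\theta)=\delta^\top\nabla\ell(y;\theta)+\tfrac12\delta^\top\nabla^2\ell(y;\theta)\delta+R(y;\delta),
\]
where $|R(y;\delta)|\le \tfrac16 M\|\delta\|^3$ (up to a dimension-dependent constant) and $M$ bounds the third derivatives.

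Taking expectations term by term uses two standard score identities. First, $\E_\theta[\nabla\ell(Y;\theta)]=0$, which follows from differentiating $\int p(y;\theta)\,dy=1$ once under the integral sign. Second, $\E_\theta[\nabla^2\ell(Y;\theta)]=-G(\theta)$, which follows from differentiating the same identity a second time and using $\nabla^2 p/p=\nabla^2\ell+\nabla\ell\,\nabla\ell^\top$ together with Definition~\ref{def:fim}. Substituting these, the linear term vanishes and the quadratic term becomes $-\tfrac12\delta^\top G(\theta)\delta$. Changing sign then yields $\tfrac12\delta^\top G(\theta)\delta-\E_\theta[R(Y;\delta)]$. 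The uniform bound on $R$ gives $|\E_\theta R|\le C\|\delta\|^3$, which is the $O(\|\delta\|^3)$ term in \eqref{eq:kl-fim}. Continuity of $G$ in the neighborhood is not needed for this pointwise expansion, but it lets the expansion be stated with locally uniform constants.

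The main obstacle is justifying the interchange of differentiation and integration in the two score identities, together with the integrability of the remainder. Assumption~\ref{ass:local-euclidean} only says the third derivatives are bounded. I will read this as a bound uniform in $y$, which makes the remainder integrable at once. For the interchange, I will add the standard regularity conventions that any paper invoking Fisher information implicitly assumes: the support of $p(\cdot;\theta)$ does not depend on $\theta$, and the first two derivatives of $p$ are dominated by an integrable envelope near $\theta$. With these in place, dominated convergence licenses the interchange and the argument closes. If the bound is only uniform in $\theta$ for each fixed $y$, I would instead require $\E_\theta[\sup_{\theta'}\|\nabla^3\ell(Y;\theta')\|]<\infty$ and state this explicitly as the integrable-envelope form of the assumption.
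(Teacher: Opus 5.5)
Your proposal is correct and takes the same route as the paper. You write $D_{\mathrm{KL}}(P_\theta\|P_{\theta+\delta})=\E_{P_\theta}[\ell(Y;\theta)-\ell(Y;\theta+\delta)]$, Taylor-expand to second order, and use the zero-mean score and $G(\theta)=-\E_{P_\theta}[\nabla^2_\theta\ell(Y;\theta)]$ to remove the linear term and identify the quadratic one; your explicit remainder bound and the support and domination conditions for interchanging differentiation and integration are extra rigor that the paper's proof leaves implicit.
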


\begin{proof}
From the KL divergence definition, with $\ell(y;\theta)=\log p(y;\theta)$:
\[
D_{\mathrm{KL}}\!\big(P_{\theta}\,\|\,P_{\theta+\delta}\big)
  =\E_{Y\sim P_\theta}\!\left[\ell(Y;\theta)-\ell(Y;\theta+\delta)\right].
\]
Applying a second-order Taylor expansion of $\ell(Y;\theta+\delta)$ at $\theta$:
\[
\ell(Y;\theta+\delta)=\ell(Y;\theta)+\delta^\top \nabla_\theta \ell(Y;\theta)
  +\tfrac{1}{2}\delta^\top \nabla_\theta^2\ell(Y;\theta)\delta+O(\|\delta\|^3).
\]
Taking expectations under $P_\theta$ and using the score zero-mean property
$\E_{P_\theta}[\nabla_\theta \ell(Y;\theta)]=0$ and the Fisher information
identity $G(\theta)=-\E_{P_\theta}[\nabla_\theta^2\ell(Y;\theta)]$:
\[
D_{\mathrm{KL}}\!\big(P_{\theta}\,\|\,P_{\theta+\delta}\big)
  =\tfrac{1}{2}\delta^\top G(\theta)\delta+O(\|\delta\|^3). \qed
\]
\end{proof}

Applying the above local expansion to the conditional distribution
$p(y;\theta_x)$ in a neighborhood of the mixture-distribution parameter
$\bar\theta$, and assuming the conditional distributions belong to the same
regular parametric family with $\sup_x \|\theta_x-\bar\theta\| \leq \varepsilon$
and $G(\bar\theta)$ approximating $G(\theta_x)$ throughout the neighborhood
(error absorbed into $O(\varepsilon^3)$):
\begin{equation}\label{eq:mi-approx}
I(X;Y) = \E_x\!\left[\tfrac{1}{2}(\theta_x-\bar\theta)^\top
  G(\bar\theta)(\theta_x-\bar\theta)\right] + O\!\left(\varepsilon^3\right).
\end{equation}

\begin{remark}
The trace form below is used to characterize the local sensitivity
direction of defense parameters, not as a primary empirical tool.
The mutual information quantities in the empirical section are estimated
directly; the trace approximation is used only for interpretive comparison.
\end{remark}

\begin{proposition}[Covariance-Form MI Approximation]\label{prop:mi-cov}
Define the parameter covariance matrix
$\Sigma_\theta := \E_{x\sim P_X}[(\theta_x-\bar\theta)(\theta_x-\bar\theta)^\top]$.
Under the small-perturbation assumptions above:
\begin{equation}\label{eq:mi-trace}
I(X;Y) = \frac{1}{2\ln 2}\mathrm{Tr}\!\big(G(\bar\theta)\Sigma_\theta\big) + O(\varepsilon^3).
\end{equation}
The factor $\frac{1}{\ln 2}$ converts nats to bits; all mutual
information quantities in this paper are in bits.
\end{proposition}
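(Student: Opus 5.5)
The plan is to start from the quadratic form in \eqref{eq:mi-approx} and rewrite it with the trace-of-outer-product identity. First, I need to justify \eqref{eq:mi-approx} itself, since it is only sketched in the text. I will write
\[
I(X;Y)=\E_x\!\big[D_{\mathrm{KL}}(P_{\theta_x}\,\|\,P_Y)\big],
\]
where $P_Y=\E_x P_{\theta_x}$ is the mixture. Then I replace the mixture by the reference member $P_{\bar\theta}$ using the compensation identity
\[
\E_x\!\big[D_{\mathrm{KL}}(P_{\theta_x}\|P_{\bar\theta})\big]=I(X;Y)+D_{\mathrm{KL}}(P_Y\|P_{\bar\theta}).
\]
The correction term must be shown to be $O(\varepsilon^3)$. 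To do this, I take $\bar\theta=\E_x\theta_x$. Expanding $p(y;\theta_x)$ to second order around $\bar\theta$, the first-order term averages to zero, so $P_Y$ and $P_{\bar\theta}$ differ only at order $\varepsilon^2$ in density. Hence their KL divergence is $O(\varepsilon^4)$, which is absorbed into the error.

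Next, I apply Proposition~\ref{prop:kl-fim} to each term $D_{\mathrm{KL}}(P_{\theta_x}\|P_{\bar\theta})$ with $\delta=\bar\theta-\theta_x$. The quadratic form is evaluated at $G(\theta_x)$. I replace it by $G(\bar\theta)$ using continuity of $G$ plus the bounded third derivatives from Assumption~\ref{ass:local-euclidean}. These make $G$ Lipschitz on the neighborhood, so the replacement costs $\|\delta\|^2\cdot O(\|\delta\|)=O(\varepsilon^3)$. Uniformity of the $O(\|\delta\|^3)$ remainder over $x$ also follows from the bounded third derivatives and $\sup_x\|\theta_x-\bar\theta\|\le\varepsilon$. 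This lets me take expectation over $x$ and recover \eqref{eq:mi-approx} in nats.

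The final step is pure algebra. For each $x$, write
\[
\tfrac12\delta_x^\top G\delta_x=\tfrac12\mathrm{Tr}\big(G\,\delta_x\delta_x^\top\big).
\]
Linearity of trace and expectation gives $\tfrac12\mathrm{Tr}(G(\bar\theta)\Sigma_\theta)$, because $G(\bar\theta)$ does not depend on $x$. Dividing by $\ln 2$ converts nats to bits, and the error remains $O(\varepsilon^3)$. This yields \eqref{eq:mi-trace}.

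The main obstacle is the mixture-versus-center step, i.e.\ bounding $D_{\mathrm{KL}}(P_Y\|P_{\bar\theta})$. This requires interchanging differentiation and integration in $y$, and dominating the density remainder, which Assumption~\ref{ass:local-euclidean} does not literally provide. I would first try to assume that the relevant expectations of the derivative bounds are finite, so that the remainder is controlled in $L^1$ and $\chi^2$. Failing that, I would state the result with $\bar\theta$ defined as the parameter of the best approximating family member and note that the paper's phrase ``error absorbed into $O(\varepsilon^3)$'' implicitly includes this regularity.
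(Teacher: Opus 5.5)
Your final step is exactly the paper's proof: the paper takes \eqref{eq:mi-approx} as given and only rewrites $\E_x[(\theta_x-\bar\theta)^\top G(\bar\theta)(\theta_x-\bar\theta)]$ as $\mathrm{Tr}(G(\bar\theta)\Sigma_\theta)$ by linearity of trace and expectation, then divides by $\ln 2$. Your derivation of \eqref{eq:mi-approx} through $\E_x[D_{\mathrm{KL}}(P_{\theta_x}\|P_{\bar\theta})]=I(X;Y)+D_{\mathrm{KL}}(P_Y\|P_{\bar\theta})$ with $\bar\theta=\E_x\theta_x$ correctly fills in a step the paper only asserts; fixing $\bar\theta$ at the mean is what removes an otherwise $O(\varepsilon^2)$ bias, and the integrability conditions you flag (including for your Lipschitz claim on $G$) are the ones the paper silently absorbs into $O(\varepsilon^3)$.
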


\begin{proof}
Using the trace property $\mathrm{Tr}(AB) = \mathrm{Tr}(BA)$ and the
interchangeability of trace and expectation:
\begin{align*}
\E_x\!\left[(\theta_x-\bar\theta)^\top G(\bar\theta)(\theta_x-\bar\theta)\right]
  &= \mathrm{Tr}\!\big(G(\bar\theta)\E_x[(\theta_x-\bar\theta)(\theta_x-\bar\theta)^\top]\big)\\
  &= \mathrm{Tr}\!\big(G(\bar\theta)\Sigma_\theta\big).
\end{align*}
Substituting into~\eqref{eq:mi-approx} and dividing by $\ln 2$ gives the
result. \qed
\end{proof}

Equation~\eqref{eq:mi-trace} has a clear geometric physical meaning:
side-channel leakage depends on the product of two factors:
(1) the Fisher information matrix $G(\bar\theta)$, characterizing the local
curvature/sensitivity of the statistical manifold, reflecting the
distinguishability of different directions in parameter space;
(2) the covariance matrix $\Sigma_\theta$, characterizing the spread of
application semantics $X$ in parameter space, reflecting the distinguishability
of different applications by their traffic features.
Leakage is significant only when the high-sensitivity directions of the
Fisher metric coincide with the principal directions of semantic spread.

\begin{proposition}[Fisher Information Chain Rule]\label{prop:fisher-chain}
For observed variables $Y=(Y_{\mathrm{size}},Y_{\mathrm{dir}})$, if the
joint density factorizes as
$p(y_{\mathrm{size}},y_{\mathrm{dir}};\theta)=p(y_{\mathrm{size}};\theta)\,
p(y_{\mathrm{dir}}\mid y_{\mathrm{size}};\theta)$
and both the marginal and conditional distributions are regularly
parametrized by the \emph{same} parameter $\theta$ (satisfying the
usual regularity conditions: support independent of $\theta$, score
function exists and integration and differentiation can be exchanged),
then the Fisher information matrix satisfies:
\begin{equation}\label{eq:fisher-chain}
G_{Y}(\theta) = G_{Y_{\mathrm{size}}}(\theta)
  + \E_{Y_{\mathrm{size}}\sim p(\cdot;\theta)}\!\Big[\,G_{Y_{\mathrm{dir}}\mid Y_{\mathrm{size}}}(\theta;Y_{\mathrm{size}})\,\Big],
\end{equation}
where $G_{Y_{\mathrm{dir}}\mid Y_{\mathrm{size}}}(\theta;y_{\mathrm{size}})$
is the conditional Fisher information matrix given
$Y_{\mathrm{size}}=y_{\mathrm{size}}$.
\end{proposition}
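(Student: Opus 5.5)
The plan is to work directly with the score function. Taking logarithms of the assumed factorization gives
\[
\log p(y_{\mathrm{size}},y_{\mathrm{dir}};\theta)=\log p(y_{\mathrm{size}};\theta)+\log p(y_{\mathrm{dir}}\mid y_{\mathrm{size}};\theta).
\]
Differentiating in $\theta$ splits the joint score additively, $s_Y=s_{\mathrm{size}}+s_{\mathrm{dir}\mid\mathrm{size}}$. Here $s_{\mathrm{size}}=\nabla_\theta\log p(Y_{\mathrm{size}};\theta)$ and $s_{\mathrm{dir}\mid\mathrm{size}}=\nabla_\theta\log p(Y_{\mathrm{dir}}\mid Y_{\mathrm{size}};\theta)$. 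Using Definition~\ref{def:fim}, $G_Y(\theta)=\E[s_Ys_Y^\top]$, so expanding the outer product yields four terms:
\[
\E[s_{\mathrm{size}}s_{\mathrm{size}}^\top]+\E[s_{\mathrm{dir}\mid\mathrm{size}}s_{\mathrm{dir}\mid\mathrm{size}}^\top]+\E[s_{\mathrm{size}}s_{\mathrm{dir}\mid\mathrm{size}}^\top]+\E[s_{\mathrm{dir}\mid\mathrm{size}}s_{\mathrm{size}}^\top].
\]

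The first term is $G_{Y_{\mathrm{size}}}(\theta)$ by definition, since it depends only on the marginal of $Y_{\mathrm{size}}$. For the second term I will apply the tower property, conditioning on $Y_{\mathrm{size}}$. This gives
\[
\E_{Y_{\mathrm{size}}}\Big[\E\big[s_{\mathrm{dir}\mid\mathrm{size}}s_{\mathrm{dir}\mid\mathrm{size}}^\top\mid Y_{\mathrm{size}}\big]\Big],
\]
and the inner expectation is exactly the conditional Fisher matrix $G_{Y_{\mathrm{dir}}\mid Y_{\mathrm{size}}}(\theta;Y_{\mathrm{size}})$. That produces the expectation term in \eqref{eq:fisher-chain}.

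The key step, and the only place the regularity hypotheses are really used, is showing that the two cross terms vanish. Conditioning on $Y_{\mathrm{size}}=y_{\mathrm{size}}$, the factor $s_{\mathrm{size}}$ is fixed and can be pulled out, so it suffices to show the conditional score has zero conditional mean:
\[
\E[s_{\mathrm{dir}\mid\mathrm{size}}\mid Y_{\mathrm{size}}=y_{\mathrm{size}}]=0 .
\]
To verify this, write the conditional mean as $\int\nabla_\theta p(y_{\mathrm{dir}}\mid y_{\mathrm{size}};\theta)\,dy_{\mathrm{dir}}$. Because the support is independent of $\theta$ and differentiation and integration may be exchanged, this equals $\nabla_\theta\int p(y_{\mathrm{dir}}\mid y_{\mathrm{size}};\theta)\,dy_{\mathrm{dir}}=\nabla_\theta 1=0$.

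I would state explicitly that the conditional support must not depend on $\theta$ for each $y_{\mathrm{size}}$, since the interchange is needed for every such value. I would also require integrability of $s_{\mathrm{size}}s_{\mathrm{dir}\mid\mathrm{size}}^\top$ so that the tower property is justified. I regard this interchange-and-integrability bookkeeping as the main obstacle. Everything else is algebra, and in particular the argument does not need the second-derivative form of the Fisher matrix.
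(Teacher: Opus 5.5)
Your proposal is correct and follows essentially the same route as the paper's proof. You split the joint score additively, identify the two diagonal terms as $G_{Y_{\mathrm{size}}}$ and, via iterated expectation, $\E_{Y_{\mathrm{size}}}[G_{Y_{\mathrm{dir}}\mid Y_{\mathrm{size}}}]$, and eliminate the cross terms using the zero conditional mean of the conditional score. Your extra integrability condition on the cross term is automatic once both Fisher matrices are finite, by Cauchy--Schwarz.
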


\begin{proof}
The Fisher matrix is
$G_Y(\theta) = \E[\nabla_\theta \log p(Y;\theta)
  \nabla_\theta \log p(Y;\theta)^\top]$.
By the chain rule:
$\nabla_\theta \log p(Y;\theta) = \nabla_{\mathrm{size}} + \nabla_{\mathrm{dir}|\mathrm{size}}$,
where $\nabla_{\mathrm{size}} = \nabla_\theta \log p(Y_{\mathrm{size}};\theta)$
and $\nabla_{\mathrm{dir}|\mathrm{size}} = \nabla_\theta \log p(Y_{\mathrm{dir}}\mid Y_{\mathrm{size}};\theta)$.
Expanding and using $\E[\nabla_{\mathrm{dir}|\mathrm{size}}\mid Y_{\mathrm{size}}]=0$
(score zero-mean for the conditional), the cross terms vanish.
The first term is $G_{Y_{\mathrm{size}}}(\theta)$.
The second term equals
$\E_{Y_{\mathrm{size}}}[G_{Y_{\mathrm{dir}}\mid Y_{\mathrm{size}}}(\theta;Y_{\mathrm{size}})]$
by iterated expectation, yielding the result. \qed
\end{proof}

\begin{corollary}[Iterated Generalization of the Fisher Chain Rule]\label{cor:fisher-chain-3}
For $Y=(Y_1,Y_2,Y_3)$ with joint density
$p(y_1,y_2,y_3;\theta)=p(y_1;\theta)p(y_2\mid y_1;\theta)p(y_3\mid y_1,y_2;\theta)$:
\begin{align}
G_{(Y_1,Y_2,Y_3)}(\theta) &= G_{Y_1}(\theta)
  + \E_{Y_1}\!\big[G_{Y_2\mid Y_1}(\theta;Y_1)\big]\notag\\
  &\quad + \E_{Y_1,Y_2}\!\big[G_{Y_3\mid Y_1,Y_2}(\theta;Y_1,Y_2)\big].
\end{align}
\end{corollary}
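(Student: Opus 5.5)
The plan is to apply Proposition~\ref{prop:fisher-chain} twice, grouping variables so that each application sees only a two-block factorization. The joint density factorizes as $p(y_1,y_2,y_3;\theta)=p(y_1,y_2;\theta)\,p(y_3\mid y_1,y_2;\theta)$, where $p(y_1,y_2;\theta)=p(y_1;\theta)p(y_2\mid y_1;\theta)$. We treat the pair $W:=(Y_1,Y_2)$ as the first block and $Y_3$ as the second block, and assume the same regularity conditions as in Proposition~\ref{prop:fisher-chain}: $\theta$-independent support, existence of scores, and interchange of differentiation and integration for each factor.

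\textbf{First application.} Take the outer split $(W,Y_3)$. Proposition~\ref{prop:fisher-chain} then gives
\[
G_{(Y_1,Y_2,Y_3)}(\theta)=G_{W}(\theta)+\E_{W}\big[G_{Y_3\mid W}(\theta;W)\big].
\]
The second term is exactly $\E_{Y_1,Y_2}[G_{Y_3\mid Y_1,Y_2}(\theta;Y_1,Y_2)]$.

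\textbf{Second application.} Apply the proposition again to $W=(Y_1,Y_2)$ with the factorization $p(y_1;\theta)p(y_2\mid y_1;\theta)$. This yields
\[
G_W(\theta)=G_{Y_1}(\theta)+\E_{Y_1}\big[G_{Y_2\mid Y_1}(\theta;Y_1)\big].
\]
Substituting this into the first identity gives the claimed three-term formula.

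\textbf{Justifying the reuse of the proposition.} The proposition was stated for the concrete pair $(Y_{\mathrm{size}},Y_{\mathrm{dir}})$, but its proof uses only two facts. First, the log-density splits additively into marginal and conditional scores. Second, the conditional score has zero mean given the conditioning variable. Neither fact depends on what the two blocks are, so the proposition applies verbatim to a vector-valued first block $W$.

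As a sanity check, one can expand directly. The total score is $s_1+s_2+s_3$, with $s_1=\nabla_\theta\log p(Y_1;\theta)$, $s_2=\nabla_\theta\log p(Y_2\mid Y_1;\theta)$ and $s_3=\nabla_\theta\log p(Y_3\mid Y_1,Y_2;\theta)$. We have $\E[s_2\mid Y_1]=0$ and $\E[s_3\mid Y_1,Y_2]=0$. By iterated expectation, every cross term $\E[s_is_j^\top]$ with $i<j$ vanishes, because $s_i$ is measurable with respect to the variables that $s_j$ is conditioned on. The diagonal terms then reduce to the three stated expectations.

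\textbf{Main obstacle.} The only delicate point is the zero-mean property of each conditional score. This requires exchanging $\nabla_\theta$ with integration over $y_3$ (respectively $y_2$) for almost every value of the conditioning variables. I would state this as part of the inherited regularity hypotheses rather than prove it. Finiteness of each Fisher term, needed so that the cross expectations are well defined, is covered by the same assumption.
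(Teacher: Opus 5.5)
Your proof is correct. It differs from the paper's only in how the three blocks are grouped.

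The paper first splits $(Y_1,(Y_2,Y_3))$, which gives $G_{Y_1}+\E_{Y_1}[G_{(Y_2,Y_3)\mid Y_1}(\theta;Y_1)]$. It then applies Proposition~\ref{prop:fisher-chain} a second time, \emph{conditionally on} $Y_1$, to the pair $(Y_2,Y_3)$. It finishes with iterated expectation to merge $\E_{Y_1}\E_{Y_2\mid Y_1}$ into $\E_{Y_1,Y_2}$.

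You instead split $((Y_1,Y_2),Y_3)$ first and then apply the proposition to the unconditional marginal $p(y_1,y_2;\theta)$. The difference is small but real. The paper's route silently uses a conditional version of the proposition, which is stated only for unconditional densities, so its argument must be re-run with every expectation taken given $Y_1=y_1$. Your route needs only the proposition with a vector-valued first block, which you justify explicitly. The tower-property step is also absorbed into your first application, since $\E_W[G_{Y_3\mid W}]$ is already $\E_{Y_1,Y_2}[\cdot]$.

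Your direct score expansion $s_1+s_2+s_3$ with vanishing cross terms is the most transparent of the arguments and makes the choice of grouping irrelevant. It also shows that the only hypotheses needed are $\E[s_2\mid Y_1]=0$, $\E[s_3\mid Y_1,Y_2]=0$, and square-integrability of each score.
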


\begin{proof}
Apply Proposition~\ref{prop:fisher-chain} to $(Y_1,(Y_2,Y_3))$, then again
to $(Y_2,Y_3)$ conditioned on $Y_1$, using iterated expectation. \qed
\end{proof}

\section{Leakage Decomposition and Causal Efficacy}\label{sec:leakage}

This section combines the causal model with information geometry to
establish the leakage decomposition framework and discuss causal efficacy.

\textbf{Notation:} All mutual information and entropy in this paper are in
bits (logarithm base $\log_2$); the Fisher information matrix is defined
using the natural logarithm (the standard convention independent of
parametrization).

We treat $Y_{\mathrm{size}}, Y_{\mathrm{dir}}, Y_{\mathrm{time}}$ as
random variable components extracted from the raw metadata sequence
$Z=\{(t_i,\ell_i,d_i)\}$ via measurable feature maps; they need not be
mutually independent, but each dimension corresponds to a statistical
description of one class of underlying metadata.
By the data-processing inequality, since $X\to Z\to Y$ forms a Markov
chain, feature extraction cannot increase information about $X$, i.e.,
$I(X;Y)\leq I(X;Z)$, so analyzing $I(X;Y)$ is a conservative
representation of the information available to an adversary from raw
metadata.
The chain decomposition depends only on the joint-distribution identity of
random variables.

\subsection{Chain Decomposition of Mutual Information}

\begin{proposition}[Leakage Chain Decomposition]\label{prop:chain-rule}
For any joint distribution $P_{X,Y}$:
\begin{align}\label{eq:chain-rule}
I(X;Y) &= I(X;Y_{\mathrm{size}}) + I(X;Y_{\mathrm{dir}}\mid Y_{\mathrm{size}})\notag\\
        &\quad + I(X;Y_{\mathrm{time}} \mid Y_{\mathrm{size}},Y_{\mathrm{dir}}).
\end{align}
\end{proposition}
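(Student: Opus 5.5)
The identity in Proposition~\ref{prop:chain-rule} is the chain rule for mutual information applied twice, so I will prove it by writing every term as a difference of (conditional) entropies and letting the sums telescope. Throughout, $Y=(Y_{\mathrm{size}},Y_{\mathrm{dir}}, Y_{\mathrm{time}})$ is treated as a single random vector, and all logarithms are base $2$.

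\textbf{Step 1 (two-block chain rule).} For arbitrary random variables $X,A,B$, I first establish
\[
I(X;A,B)=I(X;A)+I(X;B\mid A).
\]
In the discrete case, use $I(X;A,B)=H(X)-H(X\mid A,B)$ and insert $\pm H(X\mid A)$. The first pair, $H(X)-H(X\mid A)$, is $I(X;A)$. The second pair, $H(X\mid A)-H(X\mid A,B)$, is by definition $I(X;B\mid A)$.

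\textbf{Step 2 (iterate).} Apply Step~1 with $A=Y_{\mathrm{size}}$ and $B=(Y_{\mathrm{dir}},Y_{\mathrm{time}})$. Then apply the conditional version of Step~1, with everything conditioned on $Y_{\mathrm{size}}$, to split
\[
I(X;Y_{\mathrm{dir}},Y_{\mathrm{time}}\mid Y_{\mathrm{size}})=I(X;Y_{\mathrm{dir}}\mid Y_{\mathrm{size}})+I(X;Y_{\mathrm{time}}\mid Y_{\mathrm{size}},Y_{\mathrm{dir}}).
\]
The conditional version follows by the same telescoping argument, or by applying Step~1 pointwise to $P_{X,Y\mid Y_{\mathrm{size}}=s}$ and averaging over $s$. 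Substituting this into the first split gives~\eqref{eq:chain-rule}.

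\textbf{Main obstacle (generality).} The statement says ``for any joint distribution,'' and the timing features are continuous, so differential entropies can be infinite or ill-defined, and the telescoping argument is then not valid as written. My first attempt at this step will be to work directly with mutual information defined as a KL divergence: $I(X;Y)=D_{\mathrm{KL}}(P_{XY}\|P_X\otimes P_Y)$, with conditional MI defined via regular conditional distributions.

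The key tool is the chain rule for KL divergence on the product space,
\[
D(P_{X,A,B}\|P_X\otimes P_{A,B})=D(P_{X,A}\|P_X\otimes P_A)+\E_{X,A}\big[D(P_{B\mid X,A}\|P_{B\mid A})\big].
\]
The last term is precisely $I(X;B\mid A)$. This chain rule is proved from the Radon--Nikodym derivative factorization $\frac{dP_{XAB}}{d(P_X\otimes P_{AB})}=\frac{dP_{XA}}{d(P_X\otimes P_A)}\cdot\frac{dP_{B\mid XA}}{dP_{B\mid A}}$. Because all terms are nonnegative, the identity holds in $[0,\infty]$ with no cancellation issues. Since $Y_{\mathrm{size}},Y_{\mathrm{dir}},Y_{\mathrm{time}}$ take values in standard Borel spaces (they are measurable feature maps of the metadata), regular conditional distributions exist.

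With that measure-theoretic chain rule in place, Step~2 carries over verbatim, which proves the proposition.
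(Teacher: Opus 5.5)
Your proposal is correct and matches the paper's proof: first split off $Y_{\mathrm{size}}$ with $B=(Y_{\mathrm{dir}},Y_{\mathrm{time}})$, then apply the chain rule again conditionally on $Y_{\mathrm{size}}$. The only difference is that the paper cites Cover and Thomas for the two-block chain rule, whereas you also justify it for general (non-discrete) distributions via the KL-divergence chain rule, which is a sound and slightly more careful addition.
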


\begin{proof}
Apply the MI chain rule $I(A;B,C) = I(A;B) + I(A;C\mid B)$ with
$A=X$, $B=Y_{\mathrm{size}}$, $C=(Y_{\mathrm{dir}}, Y_{\mathrm{time}})$:
\[
I(X;Y) = I(X;Y_{\mathrm{size}}) + I(X;Y_{\mathrm{dir}},Y_{\mathrm{time}}\mid Y_{\mathrm{size}}).
\]
Applying the chain rule again to the second term with $B=Y_{\mathrm{dir}}$,
$C=Y_{\mathrm{time}}$ (conditioned on $Y_{\mathrm{size}}$):
\begin{align*}
I(X;Y_{\mathrm{dir}},Y_{\mathrm{time}}\mid Y_{\mathrm{size}})
  &= I(X;Y_{\mathrm{dir}}\mid Y_{\mathrm{size}})\\
  &\quad + I(X;Y_{\mathrm{time}}\mid Y_{\mathrm{size}},Y_{\mathrm{dir}}).
\end{align*}
Combining gives~\eqref{eq:chain-rule}, which holds for any joint
distribution (see, e.g.,~\cite{cover2006elements}). \qed
\end{proof}

\textbf{Adversary's upgrade path:} The chain decomposition is not merely a
mathematical identity; it is an actionable ``incremental upgrade path'' for
the adversary.
The decomposition holds for any ordering of $Y$ (any permutation yields a
valid chain identity); this paper adopts the ordering
$Y_{\mathrm{size}}\prec Y_{\mathrm{dir}}\prec Y_{\mathrm{time}}$
aligned with the primary causal chain of the SCM for physical
interpretability.
Using only $Y_{\mathrm{size}}$ corresponds to a first-generation fingerprint
attack; adding $Y_{\mathrm{dir}}$ gives a second-generation attack with
incremental leakage $I(X;Y_{\mathrm{dir}}\mid Y_{\mathrm{size}})$;
adding $Y_{\mathrm{time}}$ gives a third-generation attack with incremental
leakage $I(X;Y_{\mathrm{time}}\mid Y_{\mathrm{size}},Y_{\mathrm{dir}})$.
The chain rule guarantees each increment is non-negative, and the three
terms sum exactly to the joint leakage $I(X;Y)$.
The defender's task is to find the ``truncation point'' with the lowest cost
among these incremental steps.

\subsection{Coupling Information and Markov Residual}\label{sec:coupling}

The chain rule decomposes total leakage into three conditional terms.
To discuss the ``compressibility'' of leakage in each dimension, we
analyze the dependences among these terms.

\begin{definition}[Coupling Information]\label{def:coupling}
The coupling information between size and direction is:
\[
C_{\mathrm{size,dir}} := I(Y_{\mathrm{size}};Y_{\mathrm{dir}}|X),
\]
the mutual information between $Y_{\mathrm{size}}$ and $Y_{\mathrm{dir}}$
given the application semantics~$X$.
If $Y_{\mathrm{size}}$ and $Y_{\mathrm{dir}}$ are statistically
independent given $X$, then $C_{\mathrm{size,dir}} = 0$; otherwise
$C_{\mathrm{size,dir}} > 0$ indicates statistical dependence.
Similarly, define $C_{\mathrm{time,size,dir}} = I(Y_{\mathrm{time}};(Y_{\mathrm{size}},Y_{\mathrm{dir}})|X)$.
\end{definition}

Coupling information $C_{\mathrm{size,dir}}$ is not a sufficient condition
for concluding that single-dimension defenses must fail, but it signals
that different dimensions have statistical adhesion at the observation level,
so that single-dimension mechanism replacement may fail to simultaneously
suppress all conditional increment terms.
More precisely: if a defense replaces only the mechanism of $f_{\mathrm{size}}$
(e.g., fixed-length padding) while leaving the interaction pattern of
$f_{\mathrm{dir}}$ unchanged (or only weakly perturbed), then the
direction feature $Y^{(D)}_{\mathrm{dir}}$ generated after the defense is
still produced by the interaction logic correlated with $X$, and
$I(X;Y^{(D)}_{\mathrm{dir}}\mid Y^{(D)}_{\mathrm{size}})$ in general
cannot be driven to zero.
Conversely, when $C_{\mathrm{size,dir}}=0$ (i.e.,
$Y_{\mathrm{size}}\perp Y_{\mathrm{dir}}\mid X$), replacing $f_{\mathrm{size}}$
typically does not significantly change the order of magnitude of the
direction residual term.
Therefore, the larger the coupling information, the more the defense needs
to focus on cross-dimensional joint mechanism replacement---coupling
information is a demand signal for measuring ``whether joint defense is
needed.''

\textbf{Testable sufficient condition for a single-dimension size defense to
suppress direction residual leakage:}
For strategy $\delta$, a sufficient condition for a single-dimension size
defense to simultaneously drive direction residual leakage to zero is that
the post-defense distribution $P^{(\delta)}$ satisfies the Markov condition
\[
X \to Y^{(\delta)}_{\mathrm{size}} \to Y^{(\delta)}_{\mathrm{dir}},
\]
i.e., $I_{P^{(\delta)}}(X;Y^{(\delta)}_{\mathrm{dir}}\mid Y^{(\delta)}_{\mathrm{size}})=0$.
This condition is testable: in the empirical section, we use plug-in MI
estimation with bootstrap confidence intervals to test whether this quantity
is significantly greater than zero.

\subsection{Gaussian Illustrative Example}\label{sec:toy}

To assign concrete side-channel meanings to all abstract symbols, we
present a closed-form Gaussian illustrative example.
Let application semantics $X\in\{A,B\}$ (e.g., ``blog page'' vs.\ ``news
portal''), prior $P_X=\mathrm{Uniform}$.

\textbf{Size component:}
\[
Y_{\mathrm{size}}\mid X=A \sim \mathcal{N}(\mu_A,\sigma^2), \quad
Y_{\mathrm{size}}\mid X=B \sim \mathcal{N}(\mu_B,\sigma^2),
\]
with $\mu_A \neq \mu_B$ (e.g., $\mu_A=500$~bytes, $\mu_B=1400$~bytes),
and $\sigma^2$ representing randomness from the protocol stack.
Parameter $\theta_{\mathrm{size}}=\mu$, Fisher information $G_{\mathrm{size}}=1/\sigma^2$,
semantic spread $\Sigma_{\theta,\mathrm{size}} = (\mu_A-\mu_B)^2/4$.
Under the small-separation setting ($|\mu_A-\mu_B|\ll\sigma$):
\[
I(X;Y_{\mathrm{size}})\approx \frac{(\mu_A-\mu_B)^2}{8\sigma^2\ln 2} \text{ (bits)}.
\]
\textbf{Tor degeneration:} When Tor's fixed 512-byte cells render
$\mu_A=\mu_B=512$ and length variance is negligible,
$I(X;Y_{\mathrm{size}})\approx 0$---the size dimension degenerates strongly
under cell-level observation, which is the key background for the Wang
dataset.

\textbf{Direction component (conditioned on size):}
\[
Y_{\mathrm{dir}}\mid (X,Y_{\mathrm{size}}) \sim \mathcal{N}(\alpha Y_{\mathrm{size}} + \beta_X,\, \tau^2),
\]
where $\alpha$ represents the size--direction causal coupling (larger
packets induce more downstream data), $\beta_X$ is the application-specific
up/downstream asymmetry offset, and $\tau^2$ is the interaction-pattern
randomness.
Conditional Fisher information $G_{\mathrm{dir}|\mathrm{size}}=1/\tau^2$;
conditional leakage $I(X;Y_{\mathrm{dir}}\mid Y_{\mathrm{size}})\approx
(\beta_A-\beta_B)^2/(8\tau^2\ln 2)$.

\textbf{Explicit computation of coupling information:}
\[
C_{\mathrm{size,dir}} = I(Y_{\mathrm{size}};Y_{\mathrm{dir}}\mid X)
  = \frac{1}{2}\log_2\!\left(1+\frac{\alpha^2\sigma^2}{\tau^2}\right) > 0 \text{ (bits)},
\]
provided $\alpha\neq 0$.
In the \textbf{Tor scenario}: as $\sigma^2\to 0$ (fixed cell size),
$C_{\mathrm{size,dir}}\to 0$---size and direction decouple approximately in Tor,
explaining why the dominant leakage in the Wang dataset comes from the
direction dimension.
This illustrative example concretely illustrates the two defense limit mechanisms:
\emph{template/strict padding} (compressing inter-class spread,
$|\mu_A-\mu_B|\to 0$, making $I(X;Y_{\mathrm{size}})\to 0$ while keeping
$C_{\mathrm{size,dir}}$ unchanged) vs.\ \emph{randomization/noise injection}
(deflating Fisher sensitivity, $\tau^2\nearrow$, making
$C_{\mathrm{size,dir}}\to 0$ while keeping $I(X;Y_{\mathrm{size}})$
unchanged), corresponding to the two geometric components of the
``dual compression effect'' discussed below.

\subsection{Geometric Characterization of Causal Efficacy}

For a dimension $Y_d$ (e.g., $Y_{\mathrm{dir}}$), we ask: can the leakage
in this dimension be effectively compressed by changing the defense
mechanism?

\textbf{Defense mechanism to geometric parameters:}
Partition the parameter space as
$\theta = [\theta_{\mathrm{size}}^\top, \theta_{\mathrm{dir}}^\top, \theta_{\mathrm{time}}^\top]^\top$,
with corresponding block decompositions of $G(\theta)$ and $\Sigma_\theta$.
Let the undefended baseline be $D_0$ with working point $\bar\theta^{(0)}$.
A defense mechanism $D$ induces a new working point
$\bar\theta^{(D)} = \psi(D)$ in the parameter space.

\textbf{Dual compression effect of dummy packet injection:}
On the one hand, the random injection of dummy packets increases the
variance of the observed data, lowering the eigenvalues of the Fisher
information matrix $G(\bar\theta^{(D)})$ (making the model less sensitive
to parameter changes); on the other hand, dummy packets are often used to
normalize features of different semantics $X$ toward a unified template,
which geometrically corresponds to compressing the spread of the parameter
covariance matrix $\Sigma_\theta^{(D)}$ along critical directions.
Both geometric effects jointly drive down the leakage
$\mathrm{Tr}(G\Sigma_\theta)$.

By Proposition~\ref{prop:mi-cov}, the effect of defense strategy $D$ on
leakage can be expressed as:
\begin{equation}\label{eq:defense-mi}
I(X;Y^{(D)}) \approx \frac{1}{2\ln 2}\mathrm{Tr}\!\big(G(\bar\theta^{(D)})\Sigma_\theta^{(D)}\big).
\end{equation}

\begin{definition}[Defense Cost Functional]\label{def:cost}
For defense strategy $D$, define the bandwidth-overhead cost functional:
\[
C(D) := \E_{X}\left[\frac{\sum_{i=1}^{n} \ell_i^{(D)}}{\sum_{i=1}^{n} \ell_i^{(D_0)}} - 1\right],
\]
measuring the expected relative bandwidth overhead.
\end{definition}

\begin{definition}[Causal Efficacy]\label{def:causal-efficacy}
Let the ordering be
$Y_{\mathrm{size}}\prec Y_{\mathrm{dir}}\prec Y_{\mathrm{time}}$,
with $S_{\mathrm{size}}=\varnothing$,
$S_{\mathrm{dir}}=\{Y_{\mathrm{size}}\}$,
$S_{\mathrm{time}}=\{Y_{\mathrm{size}},Y_{\mathrm{dir}}\}$.
Let $\delta\in\mathcal{D}$ be a chosen strategy value, $\delta_0$ the
undefended baseline, and $P^{(\delta)}$ the induced joint distribution.
Define the chain leakage term for dimension $d$ under strategy $\delta$:
\[
I_d(\delta):=I_{P^{(\delta)}}\!\big(X;Y_d \mid S_d\big),
\]
and the causal efficacy:
\[
\eta_d(\delta):=\frac{I_d(\delta_0)-I_d(\delta)}{C(\delta)-C(\delta_0)}.
\]
\end{definition}

\begin{proposition}[Leakage Decomposition--Fisher Geometric Alignment]\label{prop:leakage-isomorphism}
Under the local quadratic approximation, the chain decomposition of mutual
information and the chain decomposition of Fisher information are aligned
term by term:
\begin{multline}
\frac{1}{2\ln 2}\mathrm{Tr}(G_Y\Sigma_\theta)
  = \frac{1}{2\ln 2}\mathrm{Tr}(G_{\mathrm{size}}\Sigma_\theta)\\
  + \frac{1}{2\ln 2}\mathrm{Tr}(G_{\mathrm{dir}|\mathrm{size}}\Sigma_\theta)
  + \frac{1}{2\ln 2}\mathrm{Tr}(G_{\mathrm{time}|(\mathrm{size},\mathrm{dir})}\Sigma_\theta),
\end{multline}
where $G_Y = G_{\mathrm{size}} + G_{\mathrm{dir}|\mathrm{size}} + G_{\mathrm{time}|(\mathrm{size},\mathrm{dir})}$
follows from Corollary~\ref{cor:fisher-chain-3}.
\end{proposition}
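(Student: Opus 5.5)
The identity itself is essentially linearity of the trace applied to the Fisher chain rule, so I will establish it in two layers. First I settle the algebraic identity of traces. Then I argue the "alignment" claim: under the local quadratic approximation, each trace term matches the corresponding chain mutual-information term of Proposition~\ref{prop:chain-rule}.

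\textbf{Step 1: the Fisher decomposition.} I invoke Corollary~\ref{cor:fisher-chain-3} with $(Y_1,Y_2,Y_3)=(Y_{\mathrm{size}},Y_{\mathrm{dir}},Y_{\mathrm{time}})$, evaluated at the working point $\bar\theta$. The joint density factorizes in the SCM order of Definition~\ref{def:scm}, and I assume each factor is regularly parametrized by the same $\theta$. This gives $G_Y=G_{\mathrm{size}}+G_{\mathrm{dir}|\mathrm{size}}+G_{\mathrm{time}|(\mathrm{size},\mathrm{dir})}$, where the conditional matrices are understood as their expectations over the conditioning variables.

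\textbf{Step 2: the trace identity.} I multiply the decomposition on the right by the fixed matrix $\Sigma_\theta$, take traces, use linearity of $\mathrm{Tr}$, and scale by $1/(2\ln 2)$. This yields the displayed equation exactly, with no approximation at this stage.

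\textbf{Step 3: the alignment claim.} I want to show that each summand approximates the corresponding chain term up to $O(\varepsilon^3)$. The first term is direct: applying Proposition~\ref{prop:mi-cov} to the marginal family $p(y_{\mathrm{size}};\theta_x)$ gives $I(X;Y_{\mathrm{size}})\approx\frac{1}{2\ln 2}\mathrm{Tr}(G_{\mathrm{size}}\Sigma_\theta)$. For the conditional terms I write
\[
I(X;Y_{\mathrm{dir}}\mid Y_{\mathrm{size}})=\E\big[D_{\mathrm{KL}}(p(\cdot\mid Y_{\mathrm{size}};\theta_X)\,\|\,p(\cdot\mid Y_{\mathrm{size}}))\big].
\]
Here the reference distribution is the conditional mixture given $Y_{\mathrm{size}}$. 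I then apply Proposition~\ref{prop:kl-fim} pointwise in $y_{\mathrm{size}}$ and average.

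\textbf{Main obstacle: the reference point in Step 3.} The mixture over $X$ conditioned on $Y_{\mathrm{size}}=y$ is not exactly $p(\cdot\mid y;\bar\theta)$, because the posterior weights on $X$ depend on $y$. The resulting shift in the effective centering is $O(\varepsilon)$, so its contribution to the quadratic form is $O(\varepsilon^3)$. I would also replace the outer expectation over $Y_{\mathrm{size}}$ under the mixture by the expectation under $p(\cdot;\bar\theta)$, again at cost $O(\varepsilon^3)$, using continuity of $G$ from Assumption~\ref{ass:local-euclidean}. The same argument handles the timing term, conditioning on $(Y_{\mathrm{size}},Y_{\mathrm{dir}})$.

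Summing the three approximations and comparing with Step 2 gives term-by-term alignment, with total error $O(\varepsilon^3)$.
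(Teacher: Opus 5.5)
Your Steps 1 and 2 are exactly the paper's proof: invoke Corollary~\ref{cor:fisher-chain-3} with $(Y_1,Y_2,Y_3)=(Y_{\mathrm{size}},Y_{\mathrm{dir}},Y_{\mathrm{time}})$ at $\bar\theta$, right-multiply by $\Sigma_\theta$, and use linearity of the trace. That already proves the displayed identity. The paper stops there and treats the link to the individual terms of Proposition~\ref{prop:chain-rule} only as interpretation. Your Step 3 therefore goes beyond the paper, and as written it contains a wrong order-of-magnitude step.

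The faulty sentence is ``the shift in the effective centering is $O(\varepsilon)$, so its contribution is $O(\varepsilon^3)$.'' Fix $y=y_{\mathrm{size}}$ and let $\bar p_y=\sum_x P(x\mid y)\,p(\cdot\mid y;\theta_x)$ be the conditional mixture. For any density $q$ there is the exact identity
\[
\mathbb{E}_{X\mid y}\big[D_{\mathrm{KL}}(p(\cdot\mid y;\theta_X)\,\|\,q)\big]
=\mathbb{E}_{X\mid y}\big[D_{\mathrm{KL}}(p(\cdot\mid y;\theta_X)\,\|\,\bar p_y)\big]+D_{\mathrm{KL}}(\bar p_y\,\|\,q).
\]
So replacing $\bar p_y$ by $q=p(\cdot\mid y;\bar\theta)$ costs exactly $D_{\mathrm{KL}}(\bar p_y\|q)$, which is quadratic in the discrepancy. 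An $O(\varepsilon)$ displacement of the centre would therefore cost $O(\varepsilon^2)$. That is the same order as the term you are computing, so your stated bound does not deliver $O(\varepsilon^3)$.

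What rescues the step is that the displacement is actually $O(\varepsilon^2)$. Write $\delta_x=\theta_x-\bar\theta$ and let $s(y)$ be the score of $p(y_{\mathrm{size}};\theta)$ at $\bar\theta$. The posterior weights are $P(x\mid y)=P(x)\big(1+s(y)^\top\delta_x+O(\varepsilon^2)\big)$. Expanding then gives $\bar p_y=p(\cdot\mid y;\bar\theta)\big(1+O(\varepsilon^2)\big)$, but only because $\mathbb{E}_X[\delta_X]=0$. You must state that $\bar\theta$ is the mean of $\theta_X$; this is the same centring that eq.~\eqref{eq:mi-approx} needs. With it, the swap costs $O(\varepsilon^4)$. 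Your outer-expectation swap is fine, but it rests on smooth dependence of the law of $Y_{\mathrm{size}}$ on $\theta$ and a Taylor remainder uniform in $y$, not on continuity of $G$.

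A shorter route avoids conditional mixtures entirely. Apply Proposition~\ref{prop:mi-cov} to the nested families $Y_{\mathrm{size}}$, $(Y_{\mathrm{size}},Y_{\mathrm{dir}})$ and $Y$, all with the same $\bar\theta$ and $\Sigma_\theta$. Then subtract and use Propositions~\ref{prop:chain-rule} and~\ref{prop:fisher-chain}:
\[
I(X;Y_{\mathrm{dir}}\mid Y_{\mathrm{size}})=I(X;Y_{\mathrm{size}},Y_{\mathrm{dir}})-I(X;Y_{\mathrm{size}})
=\tfrac{1}{2\ln 2}\mathrm{Tr}(G_{\mathrm{dir}|\mathrm{size}}\Sigma_\theta)+O(\varepsilon^3).
\]
The timing term follows the same way from Corollary~\ref{cor:fisher-chain-3}.
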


\begin{proof}
From Corollary~\ref{cor:fisher-chain-3} with $Y_1=Y_{\mathrm{size}}$,
$Y_2=Y_{\mathrm{dir}}$, $Y_3=Y_{\mathrm{time}}$:
\[
G_Y(\bar\theta) = G_{\mathrm{size}}(\bar\theta)
  + G_{\mathrm{dir}|\mathrm{size}}(\bar\theta)
  + G_{\mathrm{time}|(\mathrm{size},\mathrm{dir})}(\bar\theta).
\]
Post-multiplying both sides by $\Sigma_\theta$ and taking the trace (trace
is linear), then dividing by $\ln 2$ gives the result. \qed
\end{proof}

The physical meaning of this alignment: each term in the chain
decomposition---marginal leakage $I(X;Y_{\mathrm{size}})$, conditional
leakage $I(X;Y_{\mathrm{dir}}\mid Y_{\mathrm{size}})$, and
$I(X;Y_{\mathrm{time}}\mid Y_{\mathrm{size}},Y_{\mathrm{dir}})$---corresponds
geometrically to the product of the corresponding Fisher chain component
with the global parameter covariance matrix.
The conditional Fisher information $G_{\mathrm{dir}|\mathrm{size}}$
naturally characterizes ``the additional information contribution of
$Y_{\mathrm{dir}}$ to $\theta$ given $Y_{\mathrm{size}}$ is already
known,'' directly corresponding to the physical meaning of conditional
mutual information.

\begin{lemma}[Block-Diagonal Fisher Matrix]\label{lem:block-diagonal}
If the Fisher information matrix $G(\theta)$ has a block-diagonal structure
(i.e., $g_{ij}(\theta)=0$ for $i\neq j$ corresponding to different
dimensions), then the parameter subspaces of each dimension are mutually
orthogonal on the statistical manifold.
\end{lemma}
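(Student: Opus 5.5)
The plan is to unpack orthogonality with respect to the Fisher--Rao metric introduced after Definition~\ref{def:fim} and check it directly from the block structure. Write $\theta=[\theta_{\mathrm{size}}^\top,\theta_{\mathrm{dir}}^\top,\theta_{\mathrm{time}}^\top]^\top$ as in the partition used for causal efficacy. The tangent space $T_\theta\mathcal{M}$ is identified with $\R^d$ through the coordinate basis $\{\partial_i=\partial/\partial\theta^i\}$. For each dimension $k$, define the subspace $V_k\subset T_\theta\mathcal{M}$ as the span of the coordinate vectors $\partial_i$ with $i$ in the index block of $\theta_k$. Equivalently, $V_k$ is the set of vectors $v$ whose components outside block $k$ vanish. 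I will read the claim ``parameter subspaces are mutually orthogonal'' as the statement $\langle v,w\rangle_\theta=0$ for all $v\in V_k$, $w\in V_{k'}$ with $k\neq k'$. The hypothesis is read as $g_{ij}(\theta)=0$ whenever $i$ and $j$ lie in different blocks. Within a block, $g_{ij}$ may be nonzero.

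First, take $v\in V_k$ and $w\in V_{k'}$ with $k\neq k'$. Expand $\langle v,w\rangle_\theta=v^\top G(\theta)w=\sum_{i,j}v^i g_{ij}(\theta)w^j$. A term is nonzero only if $v^i\neq0$, which forces $i\in$ block $k$, and $w^j\neq0$, which forces $j\in$ block $k'$. For every such pair, $g_{ij}=0$ by hypothesis, so the whole sum vanishes. Second, I will note that $T_\theta\mathcal{M}=V_{\mathrm{size}}\oplus V_{\mathrm{dir}}\oplus V_{\mathrm{time}}$, and that this direct sum is $G$-orthogonal. Hence the metric splits as $\|v\|_G^2=\sum_k v_k^\top G_{kk}v_k$.

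As a consequence worth recording, the local KL expansion of Proposition~\ref{prop:kl-fim} then decomposes additively across dimensions. For $\delta=\delta_{\mathrm{size}}+\delta_{\mathrm{dir}}+\delta_{\mathrm{time}}$ we get $D_{\mathrm{KL}}\approx\frac12\sum_k\delta_k^\top G_{kk}\delta_k$. Likewise, the trace in Proposition~\ref{prop:mi-cov} reduces to $\mathrm{Tr}(G\Sigma_\theta)=\sum_k\mathrm{Tr}(G_{kk}\Sigma_{\theta,kk})$, because the off-diagonal blocks of $\Sigma_\theta$ drop out of the trace.

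The only real obstacle is not computational but definitional. The notion of orthogonality must be pinned down as pointwise orthogonality at $\theta$ of the coordinate-block subspaces. It must also be made explicit that the hypothesis concerns cross-block entries only. If one wants orthogonality of the coordinate submanifolds $\{\theta:\theta_{k'}\ \text{fixed}\ \forall k'\neq k\}$ throughout a neighbourhood, I will require the block-diagonal structure to hold at every point of that neighbourhood. The same pointwise argument then applies at each point. I will also remark that this is tangent-space orthogonality only and does not imply statistical independence of the $Y_k$, consistent with claim (1) of the introduction.
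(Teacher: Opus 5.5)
Your proposal is correct and matches the paper's proof: expand $\langle v,w\rangle_\theta=v^\top G(\theta)w$ for $v,w$ in different coordinate blocks, and note that every term that could survive involves a cross-block entry $g_{ij}=0$. Your extra consequences are also correct but go beyond the paper's one-line argument: the additive splitting of the quadratic KL term, and $\mathrm{Tr}(G\Sigma_\theta)=\sum_k\mathrm{Tr}(G_{kk}\Sigma_{\theta,kk})$.
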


\begin{proof}
For tangent vectors $v, w \in T_\theta\mathcal{M}$ from different
dimensional parameter subspaces:
$\langle v, w \rangle_\theta = v^\top G(\theta) w = 0$
when $G(\theta)$ is block-diagonal (off-diagonal blocks are zero). \qed
\end{proof}

\begin{remark}
(1)~Block-diagonal Fisher information does not directly imply statistical
independence $C_{\mathrm{size,dir}}=0$; the former is a geometric property
of parameter estimation while the latter is a statistical dependence
relationship.
(2)~In practice, size, direction, and timing patterns are strongly
correlated, so the block-diagonal assumption typically does not hold (the
Tor scenario being an exception: fixed cells cause size degeneration with
$C_{\mathrm{size,dir}}\approx 0$).
When the block-diagonal assumption fails, cross terms $g_{ij}\neq 0$ exist,
and the parameter estimation changes in different dimensional directions
interact.
One can find locally Fisher-orthogonal coordinates via eigendecomposition
$G(\theta) = Q\Lambda Q^\top$; in the coordinate system defined by $Q$'s
eigenvectors, the Fisher metric becomes diagonal.
\end{remark}

\section{Empirical Validation}\label{sec:empirical}

This section empirically tests three core theoretical claims:
(1)~The three chain components in the ordering
$\hat{Y}_{\mathrm{dir},1}\prec \hat{Y}_{\mathrm{time}}\prec \hat{Y}_{\mathrm{dir},2}$
satisfy stable magnitude relationships
($\hat{Y}_{\mathrm{dir},1}$ term $>$ $\hat{Y}_{\mathrm{dir},2}$
residual term $>$ Markov residual) on both undefended and defended data,
and the suppression effects of each defense on the three-dimensional
components are highly consistent with their design intents.
(2)~The closed-form formulas of the Gaussian illustrative example
(Section~\ref{sec:toy}) agree with Monte Carlo estimates.
(3)~The FRONT defense does not satisfy the Markov sufficient condition
(Section~\ref{sec:coupling}), and causal efficacy
(Definition~\ref{def:causal-efficacy}) is quantifiable for injection-type
defenses.

\textit{Note on chain ordering:} The theoretical framework
(Section~\ref{sec:leakage}) uses the generation-level ordering
$Y_{\mathrm{size}}\prec Y_{\mathrm{dir}}\prec Y_{\mathrm{time}}$.
The empirical section uses
$\hat{Y}_{\mathrm{dir},1}\prec \hat{Y}_{\mathrm{time}}\prec \hat{Y}_{\mathrm{dir},2}$
for two reasons: first, $Y_{\mathrm{size}}$ degenerates to a constant in
the Tor/fixed-cell scenario ($I(X;Y_{\mathrm{size}})\approx 0$), so
placing it at the chain head contributes no meaningful information and is
omitted; second, the direction proxy is split into a coarse-grained
$\hat{Y}_{\mathrm{dir},1}$ (packet count ratio) and a fine-grained
$\hat{Y}_{\mathrm{dir},2}$ (reversal rate), placed at the chain head and
tail respectively to capture direction information at different granularities.
This is a \emph{proxy-level accounting ordering} serving the operability
demonstration of the framework, without claiming to correspond to the true
causal generation order of the SCM.

\subsection{Experimental Setup and Estimation Methods}

\paragraph{Datasets}
All three datasets are from the website fingerprinting dataset collection
released by Deng~et~al.~\cite{deng2024earlystage}, based on the 95-website
Tor traffic originally collected by Sirinam~et~al.~\cite{sirinam2018deep}.

\begin{itemize}
\item \textbf{CW (Undefended baseline)}~\cite{sirinam2018deep}:
  Standard closed-world dataset, 95 websites, 1051--1125 traces per class
  (full set); compiled and distributed by Deng~et~al.~\cite{deng2024earlystage}.
  Mean flow: $2038\pm 2020$ packets/flow, mean duration $27.9\pm27.2$~seconds.

\item \textbf{FRONT}~\cite{gong2020zero}:
  A zero-delay lightweight defense (Gong \& Wang, USENIX Security~2020).
  Defense mechanism: dummy packets injected at the start of flows; per-direction
  dummy counts sampled from Uniform, with each dummy packet's timing sampled
  from a Rayleigh distribution (window parameter $w$).
  Same 95 websites, 1000 traces per class.
  Measured packet count overhead: $2873\pm 2066$ packets/flow
  (+41\% vs.\ CW); mean duration $27.8\pm 27.0$~seconds (confirming
  zero-delay property).

\item \textbf{TrafficSliver}~\cite{delacadena2020trafficsliver}:
  A multi-path splitting defense (De la Cadena~et~al., CCS~2020;
  TrafficSliver-Net variant).
  Defense mechanism: Tor traffic distributed across multiple guard nodes
  using batch weighted random (BWR) splitting; the adversary can only
  eavesdrop on one guard node.
  Same 95 websites, 1000 traces per class.
  Measured per-path packets: $1071\pm 1447$ packets/flow ($\approx$52.6\%
  of CW, consistent with partial-flow observation under multi-path splitting);
  mean duration $26.8\pm 26.9$~seconds (no additional delay).
\end{itemize}

We subsample 200 traces per class from each dataset (19,000 traces total)
for MI estimation.

\paragraph{Proxy Features}
The Wang dataset contains only packet directions and timestamps, no byte
sizes.
More critically, since it comes from the Tor network---which uses fixed
512-byte cells for all application data---the size dimension
$Y_{\mathrm{size}}$ is \emph{completely degenerate} in physical terms
(all packets are 512 bytes, $I(X;Y_{\mathrm{size}})\approx 0$).
Rather than reproducing the optimal attack feature set, we construct
\textbf{low-dimensional scalar proxy statistics} to demonstrate the
computability of the decomposition and coupling criteria on real data:
\begin{align*}
\hat{Y}_{\mathrm{dir},1} &= \frac{n_{\mathrm{in}}}{n_{\mathrm{total}}},\quad
\hat{Y}_{\mathrm{time}} = \frac{t_{\mathrm{end}}}{n_{\mathrm{total}}-1},\\
\hat{Y}_{\mathrm{dir},2} &= \frac{\#\text{direction-reversal pairs}}{n_{\mathrm{pairs}}}.
\end{align*}
$\hat{Y}_{\mathrm{dir},1}$ captures up/downstream asymmetry (fraction of
incoming packets), proxying $Y_{\mathrm{dir}}$;
$\hat{Y}_{\mathrm{time}}$ is the mean inter-packet interval, proxying
$Y_{\mathrm{time}}$;
$\hat{Y}_{\mathrm{dir},2}$ is the direction reversal rate (half-duplex
switching intensity), another proxy for $Y_{\mathrm{dir}}$.
Note: the two direction proxies ($\hat{Y}_{\mathrm{dir},1}$ and
$\hat{Y}_{\mathrm{dir},2}$) capture different aspects of the direction
sequence (overall asymmetry vs.\ local switching patterns); they are
treated as a joint proxy for $Y_{\mathrm{dir}}$.
All three are weak approximations---not equivalents---of the atomic
components in Definition~\ref{def:multi-dim}.

The theoretical framework always refers to the physical observation
variables $Y=(Y_{\mathrm{size}},Y_{\mathrm{dir}},Y_{\mathrm{time}})$.
The empirical section, constrained by the observability of Wang/Tor data,
constructs $\hat{Y}$ as a computable proxy for $Y$; the chain accounting
of $\hat{Y}$ is intended to demonstrate the computational closure of the
structured decomposition, not to claim equivalence to a full-information
decomposition of $Y$.

\paragraph{MI Estimation and Uncertainty}
\textbf{Estimator:} Discrete plug-in (histogram) MI estimator.
The plug-in estimator is chosen over continuous estimators (e.g., KSG)
for two reasons:
(i)~it guarantees that the chain identity
$I(X;Y_2\mid Y_1)=I(X;Y_1,Y_2)-I(X;Y_1)$
is strictly closed at the estimation level, avoiding systematic bias
accumulation across different estimators;
(ii)~the discrete joint distribution makes bootstrap resampling and
confidence interval construction controllable and interpretable.

\textbf{Binning:} Uniformly 5~bins per dimension ($5^3=125$ joint symbols,
expected sample count $\approx 152$, controlling sparse-symbol bias).
\textbf{Bin boundaries:} Quantile-equal-frequency binning for each continuous
proxy variable to improve robustness against outliers and avoid sparse symbols.
Bin boundaries are estimated from the CW subsample (200$\times$95 traces,
random seed~42) and fixed; then applied to FRONT and TrafficSliver to ensure
cross-mechanism comparability.
Bin boundaries depend only on the marginal distribution quantiles of proxy
variables, not on class labels $X$, so they introduce no label information.
Conditional MI is computed by the difference identity:
first computing $I(X;Y_1)$, $I(X;Y_1,Y_2)$, $I(X;Y_1,Y_2,Y_3)$,
then deriving conditional terms.
If a difference is negative (due to finite-sample bias), zero truncation
is applied; the negative-value occurrence rate is $<1\%$, and zero truncation
does not affect the conclusions.

\textbf{Bootstrap:} 200 rounds of flow-level resampling with replacement
for core quantities (Markov residual, coupling information, chain components);
each round recomputes all marginal/joint MI from scratch to reflect the
uncertainty of the entire computation chain.
95\% CIs use the percentile method (2.5\%--97.5\%).
Experiments show that increasing to 500 rounds changes CI width by
$<3\%$, not affecting significance conclusions.

\textbf{Binning sensitivity:} In the range of 5 to 20 bins, the spectral
order of the three-dimensional chain components
($Y_{\mathrm{dir},1}$ dominates $>$ $Y_{\mathrm{dir},2}$ $>$
$I(X;Y_{\mathrm{time}}\mid Y_{\mathrm{dir},1})$)
remains invariant on the CW dataset; numerical values increase
monotonically with bin count (finer bins $\to$ higher estimates), but the
relative ordering and cross-defense comparison conclusions are stable.
As a supplementary check, the KSG estimator ($k=5$ neighbors) for
$I(X;\hat{Y}_{\mathrm{dir},1})$ yields ranking trends consistent with the
plug-in conclusions.

\textbf{Reproducibility:} All experiments use the following fixed settings:
subsampling 200 traces per class, random seed 42; uniform class prior
(equal sampling per class); 5 bins/dimension (125 joint symbols);
flow-level bootstrap resampling 200 rounds, percentile 95\% CI.

\subsection{Experiment~I: Gaussian Illustrative Example Validation}

Using the parameters of Section~\ref{sec:toy}
($\mu_A=500$, $\mu_B=1000$, $\sigma=400$, $\alpha=0.3$, $\tau=80$,
$N=8000$ per class; separation ratio $|\mu_A-\mu_B|/\sigma=1.25$),
Table~\ref{tab:toy} compares the theoretical formulas against Monte Carlo
estimates (in bits, $\log_2$).

\begin{table}[!t]
\centering\small
\caption{Gaussian Illustrative Example: Theoretical Formulas vs.\ Monte Carlo Estimates}
\label{tab:toy}
\begin{tabular}{p{4.2cm}ccc}
\toprule
Quantity & Theory & Estimate & Err. \\
\midrule
$I(X;Y_{\mathrm{size}})$ (2nd-order)\newline
$=(\mu_A\!-\!\mu_B)^2/(8\sigma^2\ln 2)$
  & $0.195$ & $0.238$ & $+22\%$ \\[2pt]
$C_{\mathrm{size,dir}} = \tfrac{1}{2}\log_2\!\left(1\!+\!\tfrac{\alpha^2\sigma^2}{\tau^2}\right)$
  & $0.850$ & $0.806$ & $-5\%$ \\
\bottomrule
\end{tabular}
\end{table}

The $+22\%$ positive bias of $I(X;Y_{\mathrm{size}})$ is consistent with
theory: the small-separation approximation
$(\mu_A-\mu_B)^2/(8\sigma^2\ln 2)$ is a lower bound on the true Gaussian
mixture MI (the true value exceeds this approximation at finite
separation, with the error direction being predictable).
The error for $C_{\mathrm{size,dir}}$ is only $-5\%$, validating the
numerical accuracy of the coupling formula.

Two limit mechanisms are also verified:
(i)~\textbf{Inter-class spread collapse} ($\mu_A\to\mu_B$): as
$\mu_A\to\mu_B$, $I(X;Y_{\mathrm{size}})$ decreases monotonically from
$\approx 0.24$~bits to 0, while
$C_{\mathrm{size,dir}}=\frac{1}{2}\log_2(1+\alpha^2\sigma^2/\tau^2)$
is \emph{independent of $\mu_A, \mu_B$} and remains at 0.83--0.84~bits.
This confirms: \emph{compressing inter-class differences eliminates size-dimension
leakage but does not affect coupling} (in the Tor scenario, this effect is
not prominent since $Y_{\mathrm{size}}$ is already degenerate).
(ii)~\textbf{Conditional noise injection} ($\tau\uparrow$): as $\tau$
increases from~80 to~800, $C_{\mathrm{size,dir}}$ decreases monotonically
from~0.82 to~0.04~bits, while $I(X;Y_{\mathrm{size}})$ remains at
$\approx 0.23$~bits throughout.
This confirms: \emph{enhancing direction conditional noise eliminates coupling
but does not affect size-dimension leakage}.

\subsection{Experiments II--III: Real Traffic Chain Decomposition and Causal Efficacy}

\paragraph{Chain Decomposition (Experiment~II)}
Table~\ref{tab:chain} reports the MI chain decomposition results for the
three datasets (5-bin/dimension joint estimation), along with coupling
information and Markov residuals (with bootstrap 95\% CIs; format:
point estimate [CI lower, upper]).
The Markov residual $I(X;\hat{Y}_{\mathrm{time}}\mid\hat{Y}_{\mathrm{dir},1})$
tests whether the condition
$X\to\hat{Y}_{\mathrm{dir},1}\to\hat{Y}_{\mathrm{time}}$ holds at the
proxy-variable level; the augmented test residual
$I(X;\hat{Y}_{\mathrm{time}}\mid \hat{Y}_{\mathrm{dir},1},\hat{Y}_{\mathrm{dir},2})$
is a robustness check that conditions on both direction proxies jointly.

\begin{table}[!t]
\centering\small
\caption{MI Chain Decomposition: Three-Dataset Comparison
(Unified 5-Bin/Dimension Joint Estimation; Unit: bits; Markov Residual
and Coupling Information Rows Include Bootstrap 95\% CI; Augmented
Residual Row Is the Robustness Check)}
\label{tab:chain}
\scriptsize
\begin{tabular}{lrrr}
\toprule
Chain Term & CW & FRONT & TS$^{\star}$ \\
\midrule
$I(X;\hat{Y}_{\mathrm{dir},1})$
  & 0.637 & 0.236 & 0.095 \\
Markov res.\ $I(X;\hat{Y}_{\mathrm{time}}\!\mid\!\hat{Y}_{\mathrm{dir},1})$
  & 0.018$^{a}$ & 0.021$^{b}$ & 0.042$^{c}$ \\
$I(X;\hat{Y}_{\mathrm{dir},2}\!\mid\!\hat{Y}_{\mathrm{dir},1},\hat{Y}_{\mathrm{time}})$
  & 0.219 & 0.200 & 0.167 \\
$I_{\mathrm{total}}$ (chain sum)
  & \textbf{0.874} & \textbf{0.457} & \textbf{0.304} \\
\midrule
$C_{\mathrm{dir,time}}$
  & 0.042$^{d}$ & 0.014$^{e}$ & 0.024$^{f}$ \\
Augm.\ res.\ $I(X;\hat{Y}_{\mathrm{time}}\!\mid\!\hat{Y}_{\mathrm{dir},1},\hat{Y}_{\mathrm{dir},2})$
  & 0.014$^{g}$ & 0.017$^{h}$ & --- \\
\bottomrule
\multicolumn{4}{l}{$^{\star}$TS = TrafficSliver.
  $C_{\mathrm{dir,time}}=I(\hat{Y}_{\mathrm{dir},1};\hat{Y}_{\mathrm{time}}\!\mid\! X)$.}\\
\multicolumn{4}{l}{95\% Bootstrap CI (bits):
  $^{a}$[.015,.022], $^{b}$[.016,.027], $^{c}$[.036,.049],}\\
\multicolumn{4}{l}{$^{d}$[.036,.048], $^{e}$[.010,.018], $^{f}$[.020,.029],
  $^{g}$[.011,.018], $^{h}$[.013,.022].}
\end{tabular}
\end{table}

The spectral order of the three-dimensional chain components across all
three datasets is uniformly:
$I(X;\hat{Y}_{\mathrm{dir},1}) > I(X;\hat{Y}_{\mathrm{dir},2}\mid\cdot)
> I(X;\hat{Y}_{\mathrm{time}}\mid\hat{Y}_{\mathrm{dir},1})$,
and this order is invariant across 5 to 20 bins, indicating that the
\emph{component spectrum} given by the decomposition framework has robustness
to estimation parameters and is not a discretization artifact.
As a robustness check, using 20-bin high-resolution estimation of the
marginal $I(X;\hat{Y}_{\mathrm{dir},1})$, the values for the three datasets
are respectively 1.139, 0.376, 0.233~bits; the relative ordering is fully
consistent with the 5-bin estimate (0.637, 0.236, 0.095), though absolute
values are systematically higher due to finer binning.

\textbf{Mechanism--component alignment:}
FRONT focuses on dummy packet injection to obfuscate the
upstream/downstream interaction pattern;
accordingly, $I(X;\hat{Y}_{\mathrm{dir},1})$ drops substantially
($0.637\to 0.236$, $-62.9\%$), while the Markov residual (chain term~2)
and the fine-grained direction residual (chain term~3) are almost
unchanged ($0.018\to 0.021$, $0.219\to 0.200$), directly mapping to its
SCM interpretation of ``primarily replacing the $f_{\mathrm{dir}}$
mechanism.''
TrafficSliver's multi-path splitting perturbs the direction structure and
timing correlation; the Markov residual (chain term~2) rises significantly
($0.018\to 0.042$; bootstrap 95\% CIs for TrafficSliver $[0.036,0.049]$
and CW $[0.015,0.022]$ do not overlap), while the direction term drops
more ($0.637\to 0.095$, $-85.1\%$), reflecting the strong perturbation
of multi-path splitting on the direction distribution and timing correlation.

\paragraph{Causal Efficacy (Experiment~III)}
\textbf{Cost proxy:} Since the Wang dataset contains no byte lengths,
we use a \textbf{packet-count cost proxy}:
\begin{equation}\label{def:cost-proxy}
C_{\mathrm{pkt}}(D) := \frac{\mathbb{E}[n^{(D)}_{\mathrm{pkt}}]}{\mathbb{E}[n^{(D_0)}_{\mathrm{pkt}}]} - 1.
\end{equation}
The reported $\eta_d$ should be interpreted as ``bits of leakage suppressed
per unit relative packet-count overhead.''

FRONT's measured packet-count overhead is $C_{\mathrm{pkt}}(\mathrm{FRONT})=+41.3\%$.
Table~\ref{tab:eta} reports three-dimensional causal efficacy
($\Delta I_d := I_d(\mathrm{CW}) - I_d(\mathrm{FRONT})$, consistent with
the 5-bin/dimension joint estimation of Table~\ref{tab:chain}).

\begin{table}[!t]
\centering\small
\caption{FRONT Defense: Chain Causal Efficacy
(Packet-Count Overhead $=41.3\%$;
$\eta_d$ Unit: bits per unit relative packet-count overhead;
Unified 5-Bin/Dimension Estimation)}
\label{tab:eta}
\begin{tabular}{p{2.2cm}ccp{2.8cm}}
\toprule
Dim.\ $d$ & $\Delta I_d$ (bits) & $\eta_d$ & Interpretation \\
\midrule
$\hat{Y}_{\mathrm{dir},1}$
  & $+0.401$ & $\mathbf{0.97}$
  & Highly efficient (target dim.) \\[2pt]
$\hat{Y}_{\mathrm{time}}\!\mid\!\hat{Y}_{\mathrm{dir},1}$
  & $-0.003$ & $-0.007$
  & No suppression observed \\[2pt]
$\hat{Y}_{\mathrm{dir},2}\!\mid\!\cdot$
  & $+0.019$ & $0.046$
  & Marginal improvement \\
\bottomrule
\end{tabular}
\end{table}

FRONT achieves $\eta_d\approx 1.0$~bits/(relative packet-count overhead)
on the $\hat{Y}_{\mathrm{dir},1}$ dimension, while the timing and second
direction dimensions have $\eta_d$ near zero or negative---fully
consistent with FRONT's design intent of focusing dummy packet injection
on obfuscating the upstream/downstream interaction pattern without
intervening in the timing scheduling mechanism.
TrafficSliver is a multi-path splitting defense whose main system cost is
establishing multiple circuits (latency and link occupancy) rather than
extra packet injection; its cost functional is therefore incompatible with
the packet-count cost proxy.
We \emph{report comparable $\eta_d$ only for injection-type defenses};
splitting-type defenses should define a cost functional matched to their
mechanism (e.g., total link occupancy, multi-path latency tail) before
computing causal efficacy.

\subsection{Experiment~IV: Coupling Information and Markov Residual}

The last two rows of Table~\ref{tab:chain} report bootstrap CIs for the
coupling information and Markov residual.

\textbf{Test subject:} We test whether the Markov sufficient condition
$X\to \hat{Y}^{(\delta)}_{\mathrm{dir}}\to \hat{Y}^{(\delta)}_{\mathrm{time}}$
from Section~\ref{sec:coupling} holds.
Table~\ref{tab:chain} provides two levels of testing:
chain term~2 $I(X;\hat{Y}_{\mathrm{time}}\mid \hat{Y}_{\mathrm{dir},1})$
(single direction proxy condition) provides the basic test consistent with
the chain accounting;
the last row $I(X;\hat{Y}_{\mathrm{time}}\mid \hat{Y}_{\mathrm{dir},1},\hat{Y}_{\mathrm{dir},2})$
(joint direction proxy condition) provides a more stringent test by
conditioning on more direction information to rule out the alternative
explanation that ``the residual arises from insufficient direction proxy
coverage.''
Both quantities are empirical tests of the Markov chain at the
\emph{proxy-variable level}, not a test of the true high-dimensional
$(Y_{\mathrm{dir}},Y_{\mathrm{time}})$.
``Significantly greater than zero'' means ``this evidential path has not
been severed at the proxy level''; it does not mean ``no direction defense
can ever sever timing leakage.''
Failure of the sufficient condition indicates only that this specific
evidential path is blocked; we do not claim this to be a necessary condition.

\textbf{Key findings:}
After FRONT defense, the Markov residual
$I(X;\hat{Y}_{\mathrm{time}}\mid \hat{Y}_{\mathrm{dir},1}) = 0.021$~bits,
bootstrap 95\% CI $[0.016, 0.027]$ excludes zero;
the more stringent augmented test residual
$I(X;\hat{Y}_{\mathrm{time}}\mid \hat{Y}_{\mathrm{dir},1},\hat{Y}_{\mathrm{dir},2}) = 0.017$~bits,
bootstrap 95\% CI $[0.013, 0.022]$ still excludes zero---indicating the
residual is not caused solely by insufficient direction proxy coverage.
Both levels of testing show that the Markov condition does not hold at the
proxy-variable level.

TrafficSliver's augmented test residual is not reported here: in the
multi-path splitting scenario, each flow is split across different
sub-paths, so $\hat{Y}_{\mathrm{dir},2}$ (the fine-grained direction
distribution proxy) is no longer comparable in meaning to the single-path
CW/FRONT scenarios (the sub-path view's direction statistics reflect
path-assignment logic rather than application semantics); thus, the joint
conditional residual under TrafficSliver lacks an interpretive basis
consistent with CW/FRONT, and cross-scenario comparison is not made.

This result indicates that under FRONT's mechanism family and the
proxy-statistic definitions of this paper, the sufficient condition is not
satisfied empirically.
This is consistent with the theoretical analysis: FRONT replaces only the
direction generation mechanism $f_{\mathrm{dir}}$ (inserting dummy packets
to alter the upstream/downstream interaction pattern), while the timing
scheduling mechanism $f_{\mathrm{time}}$ is not replaced;
the timing proxy $\hat{Y}^{(\delta)}_{\mathrm{time}}$ after defense is
still generated by the scheduling clock correlated with $X$, so conditional
independence $X\perp\!\!\!\perp \hat{Y}^{(\delta)}_{\mathrm{time}}\mid
\hat{Y}^{(\delta)}_{\mathrm{dir}}$ cannot hold automatically, and the
Markov residual cannot be driven to zero.
The coupling information $C_{\mathrm{dir,time}}$ drops from~0.042
to~0.014~bits (95\% CI $[0.010,0.018]$ does not overlap with CW's
$[0.036,0.048]$), confirming that FRONT does weaken the statistical
association between direction and timing features;
but mere reduction of coupling is insufficient to guarantee satisfaction
of the sufficient condition, and the persistently positive Markov residual
shows that the timing residual leakage path remains open.

\subsection{Empirical Summary}

The four experiments systematically validate the three core claims:
\begin{enumerate}[label=(\arabic*)]
\item \textbf{Component spectrum aligned with defense intent}
  (Proposition~\ref{prop:chain-rule} grounded): The direction chain
  component dominates across all three datasets (consistent with the
  expected degeneration of the size dimension in the Tor scenario);
  the spectral order
  $I(X;\hat{Y}_{\mathrm{dir},1}) > I(X;\hat{Y}_{\mathrm{dir},2}\mid\cdot) > I(X;\hat{Y}_{\mathrm{time}}\mid\hat{Y}_{\mathrm{dir},1})$
  is invariant across 5 to 20 bins; the suppression effects of each defense
  on components correspond precisely to their SCM mechanism replacements.

\item \textbf{Gaussian illustrative example accuracy} (Section~\ref{sec:toy}):
  $C_{\mathrm{size,dir}}$ formula error is only~5\%;
  $I(X;Y_{\mathrm{size}})$ formula gives a lower bound in the correct
  direction ($+22\%$, a known property of the small-separation
  approximation); experiments confirm that inter-class spread collapse and
  coupling attenuation are controlled by independent parameters, and
  validate the mechanism by which Tor's fixed cells cause size-dimension
  degeneration and coupling to approach zero.

\item \textbf{Markov sufficient condition not satisfied at proxy level}
  (Section~\ref{sec:coupling}): Under FRONT's mechanism family and the
  proxy-statistic definitions of this paper, the Markov residual is
  0.021~bits with bootstrap 95\% CI $[0.016,0.027]$ excluding zero,
  empirically failing to satisfy the sufficient condition
  $X\to \hat{Y}^{(\delta)}_{\mathrm{dir}}\to \hat{Y}^{(\delta)}_{\mathrm{time}}$,
  consistent with the theory's prediction of the limitations of
  single-dimension (injection-type) direction channel mechanism replacement.
\end{enumerate}

\section{Discussion}\label{sec:discussion}

\subsection{Boundaries of Mechanism Separability}

Causal efficacy (Definition~\ref{def:causal-efficacy}) uses
\emph{mechanism separability} as an additional condition for causal
interpretation (not a prerequisite for computational feasibility):
defense $\delta$ replaces only the generation mechanism parameter family
$f_d$ of dimension $d$, leaving the structural equations $f_{d'}$
($d'\neq d$) of other dimensions unchanged---so that $\eta_d$ can be
interpreted as the counterfactual efficiency of a single mechanism
replacement, rather than the marginal attribution of a multi-dimensional
joint intervention.
The numerical value of $\eta_d$ can still be computed under joint
interventions, but its causal interpretation changes accordingly.

Injection-type defenses satisfy mechanism separability in the
\emph{sense of the structural-equation intervention target}:
FRONT replaces only the direction generation mechanism $f_{\mathrm{dir}}$
and does not modify the true packet transmission clock or scheduling
mechanism $f_{\mathrm{time}}$.
However, this does not imply separability at the \emph{proxy-statistic
level}.
The timing proxy $\hat{Y}_{\mathrm{time}} = t_{\mathrm{end}}/(n_{\mathrm{total}}-1)$
in our empirical analysis explicitly depends on $n_{\mathrm{total}}$,
and FRONT's dummy packet injection changes $n_{\mathrm{total}}$, thereby
perturbing $\hat{Y}_{\mathrm{time}}$ at the proxy-variable level even
though $f_{\mathrm{time}}$ itself is not replaced.
Therefore, the more precise causal explanation for why the Markov residual
remains significantly nonzero after FRONT is: FRONT does not replace
the real packet timing mechanism $f_{\mathrm{time}}$, and the perturbation
it introduces to the proxy timing statistic via changing $n_{\mathrm{total}}$
is insufficient to make the conditional independence
$X\to\hat{Y}_{\mathrm{dir}}\to\hat{Y}_{\mathrm{time}}$ hold at the
proxy level.

When a defense acts simultaneously on multiple dimensions through
\emph{shared network-layer resources}, mechanism separability no longer
holds.
For example, a token-bucket traffic shaper controls the send queue
uniformly, simultaneously determining how packet sizes are truncated and
when packet intervals are released, corresponding to a joint mechanism
replacement of $(f_{\mathrm{size}}, f_{\mathrm{time}})$ rather than an
independent replacement of a single structural equation.
Similarly, TrafficSliver's path-assignment decisions simultaneously affect
timing (propagation delays on different paths) and the observed direction
statistics, another typical coupled intervention.

Under joint interventions, $\eta_d(\delta)$ is still computationally valid
(the numerator $I_d(\delta_0)-I_d(\delta)$ is an empirically measurable
quantity), but its causal interpretation changes fundamentally: it captures
the \emph{marginal observational effect} of the joint intervention on
dimension $d$, rather than the counterfactual leakage compression from
replacing only the single generation mechanism parameter family $f_d$.
For splitting-type defenses, $\eta_d(\delta)$ should be understood as a
descriptive ``effect attribution'' metric rather than mechanistic efficiency
in the causal sense.
A practical criterion for judging separability: if the generation processes
of two dimensions depend on \emph{independent protocol layers} (e.g.,
application logic determines direction, network-layer congestion determines
timing), then mechanism separability holds in first-order approximation;
if they share the same physical scheduling resource, coupling is unavoidable.

\subsection{Valid Range of the Trace Approximation}

The trace approximation of Proposition~\ref{prop:mi-cov},
$I(X;Y)\approx\frac{1}{2\ln 2}\mathrm{Tr}(G(\bar\theta)\Sigma_\theta)$,
has known degeneration modes at both extremes, and understanding these
extremes is critical for correctly using the trace formula.

At the \emph{large-separation end}, the class-conditional parameters
$\theta_x$ are far apart in parameter space, and the truncation error of
the second-order Taylor expansion is comparable to the leading term.
Since $I(X;Y)$ is bounded (by $\log|\mathcal{X}|$) while
$\mathrm{Tr}(G(\bar\theta)\Sigma_\theta)$ grows unboundedly as
$\Sigma_\theta$ increases, the trace approximation systematically
over-estimates leakage in high-separation scenarios---which is precisely
the regime of greatest interest to adversaries where inter-application
differences are large and mutual information is high.

At the \emph{very-small-separation end}, both sides approach zero as
$\Sigma_\theta\to 0$, and the approximation is relatively stable in ratio
terms, but the quantity is too small to be of engineering significance;
more importantly, in this regime the trace formula essentially characterizes
``how leakage would respond if inter-class differences were slightly
increased''---a sensitivity quantity about \emph{direction} rather than
an absolute quantity.

This analysis shows that the appropriate use of the trace formula is to
compare the \emph{directional efficiency} of different defense perturbations
at a \emph{fixed baseline working point $\bar\theta$} (i.e., which defense
perturbation direction most effectively reduces $\mathrm{Tr}(G\Sigma_\theta)$),
rather than comparing the absolute leakage levels of different defense
schemes across baseline points.
In the empirical section of this paper, direct MI estimation is the primary
tool and the trace approximation is used only for conceptual comparison,
precisely to avoid relying on the quantitative accuracy of this approximation
on real data.
In the Tor/Wang scenario, the order of magnitude of the direction--timing
components determines that the trace approximation has reference value only
in the timing dimension (low-separation end), while the direction dimension
(high-separation end) should be estimated directly.

\section{Conclusion}\label{sec:conclusion}

This paper systematically studies the \textbf{structural decomposability}
of encrypted traffic side-channel leakage.
Based on the structural causal model
$X\to Y_{\mathrm{size}}\to Y_{\mathrm{dir}}\to Y_{\mathrm{time}}$
and the mutual-information chain rule
(Proposition~\ref{prop:chain-rule}), total leakage $I(X;Y)$ is precisely
decomposed into three sequential incremental terms for size, direction, and
timing.
Defense operations are formalized via the mechanism-replacement semantics of
defense strategy variable $D$, creating a precise and testable correspondence
between defense design and chain component suppression.
Coupling information $C_{\mathrm{size,dir}}=I(Y_{\mathrm{size}};Y_{\mathrm{dir}}\mid X)$
(Definition~\ref{def:coupling}) is introduced to measure inter-dimensional
statistical dependence; the Markov residual tests the sufficient condition
for a single-dimension defense to sever subsequent leakage.
Causal efficacy $\eta_d$ (Definition~\ref{def:causal-efficacy}) quantifies
the per-unit-cost leakage suppression efficiency.
The Fisher-geometric approximation
$I(X;Y)\approx\frac{1}{2\ln 2}\mathrm{Tr}(G\Sigma_\theta)$
(Proposition~\ref{prop:mi-cov}) is established under a differentiable
parametric family and small-perturbation assumptions.
The Fisher-geometric alignment (Proposition~\ref{prop:leakage-isomorphism})
shows that the chain structures of mutual information and Fisher information
correspond term by term under the second-order trace approximation.
The unification of chain decomposition, coupling measurement, and
Fisher-geometric analysis constitutes a computable, structured leakage
accounting method and a theoretical foundation for multi-dimensional joint
defense design.

The output of this paper is a structured decomposition and testable
criteria, \emph{not} a universal theorem that ``any single-dimension
defense must fail.''
The core structure revealed by the chain decomposition is: a
single-dimension defense replaces the generation mechanism of dimension $d$,
but the conditional terms of the remaining dimensions in the chain leakage
path are not intervened upon, leaving residual paths open.
Coupling information $C_{\mathrm{size,dir}}$ is the leading-indicator
warning of this failure mechanism: the larger its value, the harder it is
for single-dimension mechanism replacement to simultaneously suppress all
conditional increment terms; the Markov residual is its testable
posterior criterion.
The appropriate use of the trace approximation is to compare the
directional efficiency of different defense perturbations at a fixed working
point, not to compare absolute leakage values across baseline points;
causal efficacy $\eta_d$ provides a scalar causal attribution for defenses
under mechanism separability, and degenerates to a descriptive marginal
effect metric for coupled-intervention defenses
(see Section~\ref{sec:discussion}).

In the Tor/Wang empirical validation, the above theoretical structure is
quantitatively confirmed at the component granularity.
FRONT suppresses the direction chain component from~0.637 to~0.236~bits
($-62.9\%$), while the Markov residual (chain term~2) is almost unchanged
($0.018\to 0.021$~bits), with the 95\% CI $[0.016,0.027]$ of the
0.021~bits residual excluding zero---strictly consistent with the SCM
prediction of ``only the direction mechanism is replaced; the timing path
is not intervened upon.''
These numbers provide a reproducible quantitative baseline validating the
computational closure of the framework, rather than revealing new defense
vulnerabilities.

Several substantial open problems remain for landing and generalization of
the theoretical framework.
First, the explicit construction of the mapping $\psi:\mathcal{D}\to\Theta$
---i.e., how to estimate the corresponding center parameter
$\bar\theta^{(D)}$ from real defense parameters (e.g., FRONT's injection
window $w$)---is a necessary prerequisite for applying the trace formula to
defense parameter gradient optimization.
Second, SCM modeling of memory-bearing defenses (differential-privacy
noise injection, adaptive shaping) requires introducing temporal dependence
structures into the causal graph; the current stationary memoryless defense
assumption (Definition~\ref{def:memoryless-defense}) does not directly apply.
Third, mechanism separability does not strictly hold for splitting-type
defenses; defining cost functionals matched to the mechanisms of TrafficSliver
and similar defenses is a prerequisite for extending the causal efficacy
metric.
Finally, on variable-length packet protocols such as QUIC and HTTP/3,
$Y_{\mathrm{size}}$ no longer degenerates, and the dominant relationship
among chain components may be fundamentally different from the Tor scenario;
verifying whether direction dominance is a phenomenon specific to the Tor
protocol stack is a necessary empirical step for extending the framework
to general traffic analysis scenarios.

\section*{Acknowledgments}

The authors thank the anonymous reviewers for their constructive comments.
This work was supported in part by the National Natural Science Foundation
of China.

\bibliographystyle{IEEEtran}
\bibliography{IEEEabrv,TDSC}

\begin{thebibliography}{10}
\providecommand{\url}[1]{#1}
\csname url@rmstyle\endcsname
\providecommand{\newblock}{\relax}
\providecommand{\bibinfo}[2]{#2}
\providecommand\BIBentrySTDinterwordspacing{\spaceskip=0pt\relax}
\providecommand\BIBentryALTinterwordstretchfactor{4}
\providecommand\BIBentryALTinterwordspacing{\spaceskip=\fontdimen2\font plus
\BIBentryALTinterwordstretchfactor\fontdimen3\font minus
  \fontdimen4\font\relax}
\providecommand\BIBforeignlanguage[2]{{%
\expandafter\ifx\csname l@#1\endcsname\relax
\typeout{** WARNING: IEEEtran.bst: No hyphenation pattern has been}%
\typeout{** loaded for the language `#1'. Using the pattern for}%
\typeout{** the default language instead.}%
\else
\language=\csname l@#1\endcsname
\fi
#2}}

\bibitem{wang2014effective}
T.~Wang, X.~Cai, R.~Nithyanand, R.~Johnson, and I.~Goldberg, ``Effective
  attacks and provable defenses for website fingerprinting,'' in \emph{Proc.
  23rd {USENIX} Security Symp.}\hskip 1em plus 0.5em minus 0.4em\relax {USENIX}
  Association, 2014, pp. 143--157.

\bibitem{sirinam2018deep}
P.~Sirinam, M.~Imani, M.~Juarez, and M.~Wright, ``Deep fingerprinting:
  Undermining website fingerprinting defenses with deep learning,'' in
  \emph{Proc. 2018 ACM SIGSAC Conf. Comput. Commun. Security}.\hskip 1em plus
  0.5em minus 0.4em\relax ACM, 2018, pp. 1928--1943.

\bibitem{shen2021graphdapp}
M.~Shen, J.~Zhang, L.~Zhu, K.~Xu, X.~Du, and Y.~Liu, ``Accurate decentralized
  application identification via encrypted traffic analysis using graph neural
  networks,'' \emph{IEEE Trans. Inf. Forensics Security}, vol.~16, pp.
  2367--2380, 2021.

\bibitem{lin2022etbert}
X.~Lin, G.~Xiong, G.~Gou, Z.~Li, J.~Shi, and J.~Yu, ``{ET-BERT}: A
  contextualized datagram representation with pre-training transformers for
  encrypted traffic classification,'' in \emph{Proc. ACM Web Conf. 2022}.\hskip
  1em plus 0.5em minus 0.4em\relax ACM, 2022, pp. 633--642.

\bibitem{liu2026inevitability}
G.~Liu, G.~Cheng, and W.~Liu, ``The inevitability of side-channel leakage in
  encrypted traffic,'' \emph{arXiv preprint arXiv:2602.14055}, 2026.

\bibitem{mathews2023sok}
N.~Mathews, J.~K. Holland, S.~E. Oh, M.~S. Rahman, N.~Hopper, and M.~Wright,
  ``{SoK}: A critical evaluation of efficient website fingerprinting
  defenses,'' in \emph{Proc. 44th IEEE Symp. Security Privacy}.\hskip 1em plus
  0.5em minus 0.4em\relax IEEE, 2023, pp. 969--986.

\bibitem{pearl2009causality}
J.~Pearl, \emph{Causality: Models, Reasoning, and Inference}, 2nd~ed.\hskip 1em
  plus 0.5em minus 0.4em\relax Cambridge, UK: Cambridge Univ. Press, 2009.

\bibitem{hayes2016kfingerprinting}
J.~Hayes and G.~Danezis, ``k-fingerprinting: A robust scalable website
  fingerprinting technique,'' in \emph{Proc. 25th {USENIX} Security
  Symp.}\hskip 1em plus 0.5em minus 0.4em\relax {USENIX} Association, 2016, pp.
  1187--1203.

\bibitem{shen2023subverting}
M.~Shen, K.~Liu, L.~Zhu, \emph{et~al.}, ``Subverting website fingerprinting
  defenses with robust traffic representation,'' in \emph{Proc. 32nd {USENIX}
  Security Symp.}\hskip 1em plus 0.5em minus 0.4em\relax {USENIX} Association,
  2023.

\bibitem{reed2017netflix}
A.~Reed and M.~Kranch, ``Identifying {HTTPS}-protected {Netflix} videos in
  real-time,'' in \emph{Proc. 7th ACM Conf. Data Appl. Security Privacy}.\hskip
  1em plus 0.5em minus 0.4em\relax ACM, 2017, pp. 361--368.

\bibitem{qu2023hierarchical}
J.~Qu, X.~Ma, J.~Li, X.~Luo, L.~Xue, J.~Zhang, Z.~Li, L.~Feng, and X.~Guan,
  ``An input-agnostic hierarchical deep learning framework for traffic
  fingerprinting,'' in \emph{Proc. 32nd {USENIX} Security Symp.}\hskip 1em plus
  0.5em minus 0.4em\relax {USENIX} Association, 2023.

\bibitem{jin2023transformer}
Z.~Jin, M.~Shen, L.~Zhu, \emph{et~al.}, ``Transformer-based model for multi-tab
  website fingerprinting attack,'' in \emph{Proc. 2023 ACM SIGSAC Conf. Comput.
  Commun. Security}.\hskip 1em plus 0.5em minus 0.4em\relax ACM, 2023, pp.
  1050--1064.

\bibitem{deng2024earlystage}
X.~Deng, Q.~Li, and K.~Xu, ``Robust and reliable early-stage website
  fingerprinting attacks via spatial-temporal distribution analysis,'' in
  \emph{Proc. 2024 ACM SIGSAC Conf. Comput. Commun. Security}.\hskip 1em plus
  0.5em minus 0.4em\relax ACM, 2024.

\bibitem{cherubin2022online}
G.~Cherubin, R.~Jansen, and C.~Troncoso, ``Online website fingerprinting:
  Evaluating website fingerprinting attacks on {Tor} in the real world,'' in
  \emph{Proc. 31st {USENIX} Security Symp.}\hskip 1em plus 0.5em minus
  0.4em\relax {USENIX} Association, 2022, pp. 753--770.

\bibitem{siby2023quic}
S.~Siby, L.~Barman, C.~A. Wood, M.~Fayed, N.~Sullivan, and C.~Troncoso,
  ``Evaluating practical {QUIC} website fingerprinting defenses for the
  masses,'' \emph{Proc. Privacy Enhancing Technol.}, vol. 2023, no.~4, pp.
  79--95, 2023.

\bibitem{mei2025high}
H.~Mei, G.~Cheng, and Y.~Yuan, ``High precision and efficient anonymous traffic
  classification in the real-world,'' \emph{IEEE/ACM Trans. Netw.}, vol.~33,
  no.~3, pp. 966--981, 2025.

\bibitem{dyer2012peek}
K.~P. Dyer, S.~E. Coull, T.~Ristenpart, and T.~Shrimpton, ``Peek-a-boo, {I}
  still see you: Why efficient traffic analysis countermeasures fail,'' in
  \emph{Proc. 2012 IEEE Symp. Security Privacy}.\hskip 1em plus 0.5em minus
  0.4em\relax IEEE, 2012, pp. 332--346.

\bibitem{cai2014systematic}
X.~Cai, R.~Nithyanand, T.~Wang, R.~Johnson, and I.~Goldberg, ``A systematic
  approach to developing and evaluating website fingerprinting defenses,'' in
  \emph{Proc. 2014 ACM SIGSAC Conf. Comput. Commun. Security}.\hskip 1em plus
  0.5em minus 0.4em\relax ACM, 2014, pp. 227--238.

\bibitem{wang2017walkie}
T.~Wang and I.~Goldberg, ``Walkie-talkie: An efficient defense against passive
  website fingerprinting attacks,'' in \emph{Proc. 26th {USENIX} Security
  Symp.}\hskip 1em plus 0.5em minus 0.4em\relax {USENIX} Association, 2017, pp.
  1375--1390.

\bibitem{juarez2016wtfpad}
M.~Juarez, M.~Imani, M.~Perry, C.~Diaz, and M.~Wright, ``Toward an efficient
  website fingerprinting defense,'' in \emph{Computer Security -- {ESORICS}
  2016}.\hskip 1em plus 0.5em minus 0.4em\relax Springer, 2016, pp. 27--46.

\bibitem{sabzi2024netshaper}
A.~Sabzi, R.~Vora, S.~Goswami, M.~Seltzer, M.~L{\'e}cuyer, and A.~Mehta,
  ``{NetShaper}: A differentially private network side-channel mitigation
  system,'' in \emph{Proc. 33rd {USENIX} Security Symp.}\hskip 1em plus 0.5em
  minus 0.4em\relax {USENIX} Association, 2024, pp. 3385--3402.

\bibitem{shen2024palette}
M.~Shen, K.~Ji, J.~Wu, Q.~Li, X.~Kong, K.~Xu, and L.~Zhu, ``Real-time website
  fingerprinting defense via traffic cluster anonymization,'' in \emph{Proc.
  2024 IEEE Symp. Security Privacy}.\hskip 1em plus 0.5em minus 0.4em\relax
  IEEE, 2024, pp. 3238--3256.

\bibitem{huang2025wfa2d}
J.~Huang, W.~Liu, G.~Liu, B.~Gao, and F.~Nie, ``{WF-A2D}: Enhancing privacy
  with asymmetric adversarial defense against website fingerprinting,''
  \emph{IEEE Trans. Inf. Forensics Security}, vol.~20, pp. 4739--4754, 2025.

\bibitem{amari2016information}
S.-i. Amari, \emph{Information Geometry and Its Applications}.\hskip 1em plus
  0.5em minus 0.4em\relax Tokyo, Japan: Springer, 2016.

\bibitem{amari1998natural}
------, ``Natural gradient works efficiently in learning,'' \emph{Neural
  Comput.}, vol.~10, no.~2, pp. 251--276, 1998.

\bibitem{amari1995information}
------, ``Information geometry of the {EM} and em algorithms for neural
  networks,'' \emph{Neural Netw.}, vol.~8, pp. 1379--1408, 1995.

\bibitem{cover2006elements}
T.~M. Cover and J.~A. Thomas, \emph{Elements of Information Theory},
  2nd~ed.\hskip 1em plus 0.5em minus 0.4em\relax Hoboken, NJ: Wiley, 2006.

\bibitem{gong2020zero}
J.~Gong and T.~Wang, ``Zero-delay lightweight defenses against website
  fingerprinting,'' in \emph{Proc. 29th {USENIX} Security Symp.}\hskip 1em plus
  0.5em minus 0.4em\relax {USENIX} Association, 2020, pp. 717--734.

\bibitem{delacadena2020trafficsliver}
W.~De~la Cadena, A.~Mitseva, J.~Hiller, J.~Pennekamp, S.~Reuter, J.~Filter,
  T.~Engel, K.~Wehrle, and A.~Panchenko, ``{TrafficSliver}: Fighting website
  fingerprinting attacks with traffic splitting,'' in \emph{Proc. 2020 ACM
  SIGSAC Conf. Comput. Commun. Security}.\hskip 1em plus 0.5em minus
  0.4em\relax ACM, 2020, pp. 1971--1985.

\end{thebibliography}

\end{document}